\documentclass[11pt,onecolumn]{IEEEtran}

\usepackage{amsmath,amssymb,amsthm,mathtools}
\usepackage{cite}
\usepackage{enumitem}
\usepackage{hyperref}
\hypersetup{hidelinks}
\usepackage{microtype}
\usepackage{xcolor}
\usepackage{cite}

\makeatletter
\renewcommand{\citepunct}{,\penalty\@m\ }
\makeatother

\newtheorem{theorem}{Theorem}[section]
\newtheorem{lemma}[theorem]{Lemma}

\newtheorem{corollary}[theorem]{Corollary}
\newtheorem{definition}[theorem]{Definition}
\newtheorem{remark}[theorem]{Remark}

\newcommand{\BD}{\mathcal{BD}}
\newcommand{\BI}{\mathcal{BI}}
\newcommand{\DVT}{\mathrm{DVT}}
\newcommand{\Diff}{\mathrm{Diff}}
\newcommand{\Syn}{\mathrm{Syn}}

\title{New bounds for covering codes under insertions or deletions}

\author{Chengfei Xie, Yubo Sun, and Gennian Ge%
\thanks{This research was supported by the National Key Research and Development Program of China under Grant 2025YFC3409900, the National Natural Science Foundation of China under Grant 12231014, and Beijing Scholars Program. The work of C. Xie was supported by the National Natural Science Foundation of China under Grant No. 12401440.}
\thanks{C. Xie is with the Institute of Mathematics and Interdisciplinary Sciences, Xidian University, Xi'an 710126, China (e-mail: cfxie@cnu.edu.cn).}%
\thanks{Y. Sun and G. Ge are with the School of Mathematical Sciences, Capital Normal University, Beijing 100048, China (e-mail: 2200502135@cnu.edu.cn; gnge@zju.edu.cn).}}

\begin{document}
\maketitle

\begin{abstract}
Covering codes for insertions and deletions arise naturally in the study of synchronization errors and differ substantially from their classical counterparts in the Hamming metric. In this paper, we study covering codes under insertion and deletion operations. We first show that, in contrast to the equivalence between insertion and deletion correction, insertion covering and deletion covering are not equivalent. We then develop bounds and constructions for insertion and deletion covering codes, with particular emphasis on the large-alphabet regime. For insertion covering codes, we extend a recent combinatorial approach for single insertions and establish a new lower bound for arbitrary fixed insertion radius. For deletion covering codes, we relate the problem to hypergraph covering and prove that the elementary counting lower bound is asymptotically tight when the alphabet size tends to infinity. We further provide a construction of asymptotically optimal non-binary single-deletion covering codes by using differential Varshamov--Tenengolts (VT) codes together with a completion argument. In addition, we study covering codes for burst deletions. We prove that binary differential VT codes are not only capable of correcting two-burst deletions but also have the corresponding covering property, and hence form binary perfect codes for two-burst deletions. Finally, we extend this construction to non-binary alphabets and obtain explicit $q$-ary two-burst-deletion covering codes.
\end{abstract}

\begin{IEEEkeywords}
Covering codes, insertion codes, deletion codes, hypergraph covering, differential VT codes.
\end{IEEEkeywords}

\section{Introduction}

Covering codes are a fundamental object in coding theory and discrete mathematics, with a wide range of applications including data compression \cite{cohen1997covering}, combinatorial games such as football pools \cite{hamalainen1995football}, circuit complexity \cite{smolensky1993representations}, lattice problems \cite{micciancio2004almost}, and approximate nearest-neighbor search \cite{pagh2016locality}. Much of the classical theory has been developed for the Hamming metric, where errors are modeled by symbol substitutions \cite{cohen1997covering}. In many modern applications involving strings, however, insertions and deletions provide a more natural error model. This is particularly relevant for textual and biological data, where synchronization errors occur naturally and are captured by the Levenshtein distance \cite{Lenz-21-IT}.

Covering problems under insertions and deletions differ substantially from their Hamming-metric counterparts. In the Hamming setting, both the codewords and the words to be covered lie in the same ambient space. By contrast, insertions and deletions change the word length, so the codewords and the covered words belong to spaces of different dimensions. This asymmetry, together with the irregular behavior of deletion balls, leads to new combinatorial difficulties that are absent in the substitution setting \cite{cohen1997covering,Lenz-21-IT}.


Let $[a,b]$ denote the set of integers $a,a+1,\ldots,b$, and write $[N]=[1,N]$. Let $\Sigma_q=[0,q-1]$ be the $q$-ary alphabet. For words $x$ and $y$, write $x\preceq y$ if $x$ is a subsequence of $y$.
For a positive integer $R$ and a word $x\in\Sigma_q^n$, we say that a word $y\in\Sigma_q^{n+R}$ is \emph{covered} by $x$ under $R$ insertions if $y$ can be obtained from $x$ by inserting exactly $R$ symbols, or equivalently $x\preceq y$. The set of all such words is called the \emph{$R$-insertion ball} centered at $x$, and is denoted by $\mathcal I_R(x)$. A code $C\subseteq\Sigma_q^n$ is an \emph{$R$-insertion covering code} if
\[
   \bigcup_{x\in C} \mathcal I_R(x)=\Sigma_q^{n+R}.
\]
We denote by $I(q,n,R)$ the minimum possible size of an $R$-insertion covering code. Since the size of an $R$-insertion ball is independent of its center and is given by
\begin{equation}\label{eq:insball}
 |\mathcal I_R(x)|=\sum_{i=0}^R \binom{n+R}{i}(q-1)^i,
\end{equation}
the standard sphere-covering argument gives
\begin{equation}\label{eq:insertion-sphere-bound}
 I(q,n,R)\ge
 \frac{q^{n+R}}{\sum_{i=0}^R \binom{n+R}{i}(q-1)^i}.
\end{equation}

For deletions, a word $y\in\Sigma_q^{n-R}$ is said to be \emph{covered} by $x\in\Sigma_q^n$ under $R$ deletions if $y$ can be obtained from $x$ by deleting exactly $R$ symbols, or equivalently $y\preceq x$. The set of all such words is called the \emph{$R$-deletion ball} centered at $x$, and is denoted by $\mathcal D_R(x)$. A code $C\subseteq\Sigma_q^n$ is an \emph{$R$-deletion covering code} if
\[
   \bigcup_{x\in C} \mathcal D_R(x)=\Sigma_q^{n-R}.
\]
Let $D(q,n,R)$ denote the minimum size of such a code. Since every $R$-deletion ball has size at most $\binom nR$, the elementary counting bound gives
\begin{equation}\label{eq:deletion-counting-bound}
   D(q,n,R)\ge \frac{q^{n-R}}{\binom nR}.
\end{equation}

\subsection{Known results}


Covering codes under insertions and deletions were systematically studied by Lenz, Rashtchian, Siegel, and Yaakobi \cite{Lenz-21-IT}. For insertions, they established sphere-covering lower bounds and constructed covering codes by combining random coding with an inductive argument. In particular, they proved that
\[
\frac{q^{n+R}}{\sum_{i=0}^R\binom{n+R}{i}(q-1)^i}
\le I(q,n,R)
\le
\mu_I(R)\frac{q^{n+R}}{\sum_{i=0}^R\binom{n+R}{i}(q-1)^i},
\]
where $\mu_I(1)\le 7$ and, for $R\ge2$,
\[
  \mu_I(R)\le e\bigl(R\log R+\sqrt{2R\log R}+1\bigr)\mu_I(1).
\]
Thus, for fixed $R$, their upper and lower bounds differ by at most a factor of order $O(R\log R)$. The single-insertion case has recently been further refined by Pikhurko, Verbitsky, and Zhukovskii \cite{MR4910621}. By relating single-insertion covering codes to Turan densities in extremal combinatorics, they improved the bounds in the large-alphabet regime. In particular, for fixed word length $n$ and $q\to\infty$, their results imply
\[
  (1-o(1))\frac{q^n}{n}
  \le I(q,n,1)
  \le (1+o(1))\,4.911\,\frac{q^n}{n+1}.
\]

For deletion covering codes, the situation is more delicate because deletion balls are not regular: their sizes depend on structural properties of the center word, such as the number of runs. In the binary case, the classical Varshamov--Tenengolts (VT) codes provide perfect single-deletion-correcting codes \cite{Levenshtein1992}, and hence also yield binary single-deletion covering codes of asymptotically optimal size. Consequently, as $n\to\infty$,
\[
  D(2,n,1)=(1+o(1))\frac{2^n}{n+1}.
\]
For non-binary alphabets, known single-deletion-correcting codes are generally not perfect and therefore do not directly yield covering codes. Using the binary VT construction as a building block, Lenz, Rashtchian, Siegel, and Yaakobi \cite{Lenz-21-IT} constructed $q$-ary single-deletion covering codes of size
\[
   \frac{q^n}{(n+1)\lceil q/2\rceil}.
\]
More generally, for fixed $q$ and $R$ as $n\to\infty$, they proved that
\[
  \frac{q^nR!}{n^R(q-1)^R}
  \le D(q,n,R)
  \le \mu_D(R)\frac{q^nR!}{n^R(q-1)^R},
\]
where $\mu_D(1)\le (q-1)/\lceil q/2\rceil$ and, for $R\ge2$,
\[
\mu_D(R)\le e\bigl(R\log R+\sqrt{2R\log R}+1\bigr)\mu_D(1).
\]

\subsection{Our contributions}


In this paper, we study insertion and deletion covering codes in the large-alphabet regime. Our main contributions can be summarized as follows.
\begin{itemize}[leftmargin=2em]
\item \emph{Non-equivalence of insertion and deletion covering.} Covering codes can be viewed as a natural counterpart of error-correcting codes. For insertion and deletion errors, it is well known that a code can correct $R$ deletions if and only if it can correct $R$ insertions \cite{Levenshtein-66}. It is therefore natural to ask whether an analogous equivalence holds for insertion-covering and deletion-covering codes. We give a negative answer to this question. More precisely, we construct examples showing that an $R$-deletion covering code need not be an $R$-insertion covering code, and conversely, an $R$-insertion covering code need not be an $R$-deletion covering code.
\item \emph{Improved lower bounds for $R$-insertion covering codes.}  We generalize the combinatorial approach of Pikhurko, Verbitsky, and Zhukovskii \cite{MR4910621} and prove that
\begin{equation}\label{eq:intro-lower}
 I(q,n,R)\ge \frac{q^n}{\binom{n+R}{R}-1}.
\end{equation}
This strengthens the standard sphere-covering lower bound in the large-alphabet regime and extends the phenomenon observed for single insertions to arbitrary fixed insertion radius.
\item \emph{Asymptotically tight upper bounds for $R$-deletion covering codes over large alphabets.}  By relating deletion covering codes to hypergraph covering problems, we prove that, for fixed $n$ and $R$ and sufficiently large $q$,
\begin{equation}\label{eq:intro-deletion-upper}
 D(q,n,R)\le (1+o(1))\frac{q^{n-R}}{\binom nR}.
\end{equation}
Together with \eqref{eq:deletion-counting-bound}, this determines the asymptotic value of $D(q,n,R)$ in the large-alphabet regime.
In particular, for $R = 1$, we construct nearly optimal non-binary single-deletion covering codes by
starting from a non-binary single-deletion-correcting code and adding a carefully chosen set of codewords that
covers the remaining uncovered words. A pigeonhole argument shows the existence of such a construction with
asymptotically optimal size.
\item \emph{Explicit constructions for two-burst-deletion covering codes.}
We prove that binary differential VT codes, which are known to correct two-burst deletions, also satisfy the corresponding covering property. Consequently, they form binary perfect codes for two-burst deletions. We further extend this construction to non-binary alphabets by adapting the parity-and-checksum lifting method of Lenz, Rashtchian, Siegel, and Yaakobi~\cite{Lenz-21-IT}.
\end{itemize}

\subsection{Organization}

The remainder of the paper is organized as follows. Section~\ref{sec:prelim} introduces the notation and preliminary results used throughout the paper.
Section~\ref{sec:noneq} proves the non-equivalence of insertion-covering and deletion-covering codes.
Section~\ref{sec:insertion-lower} proves the lower bound for insertion covering codes. Section~\ref{sec:deletion-upper} studies deletion covering codes and presents bounds and constructions. Section~\ref{sec:burst} provides constructions for two-burst-deletion covering codes. Finally, Section~\ref{sec:conclusion} concludes the paper with several open problems.

\section{Preliminaries}\label{sec:prelim}

In this section we collect the combinatorial tools and coding-theoretic ingredients used in the proofs.
Let $C_1,C_2,\ldots,C_k$ be finite subsets of a space $X$. For two indices $i<j$, write $C_{i,j}=C_i\cap C_j$. By Bonferroni's inequality,
\[
 \left|\bigcup_{i=1}^k C_i\right|
 \ge \sum_{i=1}^k |C_i|-\sum_{1\le i<j\le k}|C_{i,j}|.
\]
Conversely, we shall use the following inverse Bonferroni inequality (see, e.g., \cite[Lemma 8]{MR4910621}).

\begin{lemma}\label{lem:inverse-bonferroni}
Let $k\ge2$ be an integer. Let $C_1,C_2,\ldots,C_k$ be finite subsets of a space $X$ and let $G$ be a tree with vertex set $\{1,2,\ldots,k\}$ and edge set $E$. For $e=\{i,j\}\in E$, define $C_e=C_i\cap C_j$. Then
\[
 \left|\bigcup_{i=1}^k C_i\right|
 \le \sum_{i=1}^k |C_i|-\sum_{e\in E}|C_e|.
\]
\end{lemma}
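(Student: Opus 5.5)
I plan to argue pointwise, by counting how often each element of the union is counted on the right-hand side. Fix $x\in\bigcup_{i=1}^k C_i$ and let $S(x)=\{i\in[k]: x\in C_i\}$, which is nonempty. The element $x$ contributes $|S(x)|$ to $\sum_i |C_i|$. For an edge $e=\{i,j\}\in E$, we have $x\in C_e$ exactly when both endpoints lie in $S(x)$. Hence $x$ contributes $|E(G[S(x)])|$ to $\sum_{e\in E}|C_e|$, where $G[S(x)]$ is the subgraph of $G$ induced on $S(x)$. Elements outside the union contribute $0$ to both sides. Summing over all $x\in X$, it therefore suffices to prove the pointwise inequality
\[
 1\le |S(x)|-|E(G[S(x)])| \qquad\text{for every nonempty } S(x)\subseteq[k].
\]

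The key step is the standard fact that every subgraph of a tree is a forest. Concretely, $G[S]$ is acyclic because $G$ is. A forest on $m$ vertices with $c\ge1$ connected components has exactly $m-c$ edges. Thus $|E(G[S])|=|S|-c(G[S])\le |S|-1$ whenever $S\neq\emptyset$, which is exactly the pointwise inequality.

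Summing the pointwise inequality over $x$ in the union yields
\[
\left|\bigcup_{i=1}^k C_i\right|\le \sum_{i=1}^k|C_i|-\sum_{e\in E}|C_e|.
\]
Indeed, the right-hand side equals $\sum_x\bigl(|S(x)|-|E(G[S(x)])|\bigr)$, since each term is a double count that can be exchanged. No step is a genuine obstacle. The only point needing care is to use the tree hypothesis through acyclicity of induced subgraphs, rather than through connectivity. Connectivity of $G$ is not needed; the argument works for any forest.
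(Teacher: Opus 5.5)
Your proof is correct, but it takes a different route from the paper. The paper argues by induction on $k$. It removes a leaf $k$ with neighbour $k-1$ and applies the induction hypothesis to the tree on $\{1,\dots,k-1\}$. It then uses $|\bigcup_{i\le k}C_i| = |C_k| + |\bigcup_{i<k}C_i| - |C_k\cap\bigcup_{i<k}C_i|$ together with $|C_k\cap\bigcup_{i<k}C_i|\ge|C_k\cap C_{k-1}|$.

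You instead localise to a single element $x$ and reduce the lemma to the edge count of the induced forest, $|E(G[S])|=|S|-c(G[S])\le|S|-1$.

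\emph{What your route gives.} It is non-inductive and makes transparent that only acyclicity is used. It also yields the exact slack,
\[
\sum_i|C_i|-\sum_{e\in E}|C_e|-\Bigl|\bigcup_iC_i\Bigr|=\sum_{x\in\bigcup_iC_i}\bigl(c(G[S(x)])-1\bigr),
\]
so equality holds precisely when every $S(x)$ induces a connected subtree of $G$.

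\emph{What the paper's route gives.} Its induction stays within elementary set identities and mirrors the standard proof of Bonferroni-type inequalities. However, it does not expose this equality structure.

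Your remark about forests is right. It also follows directly from the tree case: a forest is a subgraph of a spanning tree, and deleting edges only enlarges the right-hand side.
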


\begin{proof}
We use induction on $k$. When $k=2$, the assertion follows immediately from the inclusion-exclusion principle. For the induction step, choose a leaf of $G$, say $k$, and without loss of generality let its unique neighbor be $k-1$. Let $G'$ be the tree induced by $\{1,\ldots,k-1\}$ and let $E'$ be its edge set. Then we have
\begin{equation}\label{yige}
\begin{split}
\sum_{e\in E}|C_{e}|=&\left|C_k\bigcap C_{k-1}\right|+\sum_{e\in E'}|C_{e}|\\
\overset{(a)}\leq&\left|C_k\bigcap C_{k-1}\right|+\sum_{i=1}^{k-1}|C_i|-\left|\bigcup_{i=1}^{k-1}C_i\right|,
\end{split}
\end{equation}
where in $(a)$ we use the inductive hypothesis.
On the other hand,
\begin{equation}\label{erge}
 \begin{split}
  \left|\bigcup_{i=1}^kC_i\right|=&|C_k|+\left|\bigcup_{i=1}^{k-1}C_i\right|-\left|C_k\bigcap\left(\bigcup_{i=1}^{k-1}C_i\right)\right|\\
  \leq&|C_k|+\left|\bigcup_{i=1}^{k-1}C_i\right|-\left|C_k\bigcap C_{k-1}\right|.
\end{split}
\end{equation}
Combining inequalities (\ref{yige}) and (\ref{erge}) gives the desired inequality.
This completes the proof.
\end{proof}

The next ingredient is a covering theorem for almost regular hypergraphs, which will be used to construct deletion covering codes in the large-alphabet regime.

For an $r$-uniform hypergraph $\mathcal H=(V,E)$ and a vertex $x\in V$, let $d(x)$ denote the \emph{degree} of $x$ in $\mathcal H$, i.e., the number of hyperedges containing $x$.
For two vertices $x,y\in V$, let $d(x,y)$ denote the \emph{codegree} of $x$ and $y$, namely, the number of hyperedges containing both $x$ and $y$.
A \emph{cover} of $\mathcal H$ is a family of hyperedges whose union contains all vertices of $V$.
For a real number $\delta>0$, we write $a=1\pm\delta$ to mean $1-\delta\le a\le 1+\delta$.
The following theorem is due to Pippenger (unpublished). For a proof, see, e.g., \cite[Theorem~4.7.1]{MR3524748}.

\begin{theorem}\label{thm:pippenger}
For every integer $r\ge2$ and reals $k\ge1$ and $a>0$, there are $\gamma=\gamma(r,k,a)>0$ and $d_0=d_0(r,k,a)$ such that for every $n\geq D\geq d_0$, the following holds. Every $r$-uniform hypergraph $\mathcal{H} = (V, E)$ on a set $V$ of $n$ vertices in which all vertices have positive degrees and which satisfies the following conditions:
\begin{enumerate}[label=(\roman*),leftmargin=2em]
  \item For all vertices $x \in V$ but at most $\gamma n$ of them, $d(x) = (1 \pm \gamma)D$.
  \item For all $x \in V$, $d(x) < kD$.
  \item For any two distinct $x, y \in V$, $d(x, y) < \gamma D$.
\end{enumerate}
contains a cover of at most $(1+a)|V|/r$ hyperedges.
\end{theorem}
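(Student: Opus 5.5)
The plan is the semi-random (R\"odl nibble) method, run in rounds that each pick a small random set of hyperedges. The key invariant is that after every round the hypergraph restricted to the still-uncovered vertices again satisfies conditions (i)--(iii), with a new degree parameter and slightly worse error terms.

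\emph{Setup.} Fix a small $\varepsilon>0$, depending on $r,k,a$, to serve as the nibble parameter. Start from $\mathcal H_0=\mathcal H$ with $V_0=V$ and $D_0=D$.

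\emph{Single round.} In round $i$, select each hyperedge of $\mathcal H_i$ independently with probability $\varepsilon/D_i$. Let $V_{i+1}$ be the set of vertices not covered by any selected edge, and let $\mathcal H_{i+1}$ consist of the edges of $\mathcal H_i$ restricted to $V_{i+1}$. Restricted edges may have fewer than $r$ vertices, so I would work with hypergraphs of edge size at most $r$ and handle the loss at the end.

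\emph{Expectations.} For a typical vertex $x$ with $d(x)=(1\pm\gamma)D_i$:
\begin{itemize}
\item $\Pr[x\text{ uncovered}]\approx(1-\varepsilon/D_i)^{d(x)}\approx e^{-\varepsilon}$;
\item the expected number of selected edges is about $\varepsilon|V_i|/r$.
\end{itemize}
This ratio (edges used per vertex covered) is $\varepsilon/\bigl(r(1-e^{-\varepsilon})\bigr)\approx (1+\varepsilon/2)/r$.

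\emph{Concentration.} I would show that $|V_{i+1}|$ and the degrees in $\mathcal H_{i+1}$ concentrate around their expectations. The codegree bound (iii) makes the events ``$y$ uncovered'', for $y$ in an edge through $x$, nearly independent. As a result, each surviving edge through $x$ keeps all its other vertices with probability about $e^{-\varepsilon(r-1)}$, so new degrees are about $D_{i+1}=D_ie^{-\varepsilon(r-1)}$, with codegrees still $o(D_{i+1})$. Concentration would come from Talagrand's or Azuma's inequality, or from a second-moment computation using (iii). The few atypical vertices from (i), whose degrees are still at most $kD$ by (ii), are put into a garbage set and covered at the end by one arbitrary edge each; they cost only $O(\gamma n)$ edges.

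\emph{Iteration and cleanup.} After $t$ rounds, with $t$ large enough that $e^{-\varepsilon t}<\varepsilon$, the uncovered set has size about $e^{-\varepsilon t}n$. Summing over rounds, the edges used number about
\[
\sum_i\frac{\varepsilon}{r}|V_i|\le\frac{(1+O(\varepsilon))n}{r}.
\]
Every remaining vertex has positive degree in $\mathcal H$, so each can be covered by one extra edge. This adds at most $\varepsilon n+O(\gamma n)$ edges in total. Choosing $\varepsilon$ and then $\gamma$ small, and $d_0$ large, gives at most $(1+a)n/r$ edges.

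\emph{Main obstacle.} The hard step is the propagation of pseudo-randomness: proving that with positive probability all but a $\gamma_{i+1}$-fraction of surviving vertices have degree $(1\pm\gamma_{i+1})D_{i+1}$, that codegrees stay below $\gamma_{i+1}D_{i+1}$, and that the errors $\gamma_i$ grow only by a bounded factor per round, so that after $t=t(\varepsilon)$ rounds they are still small. The maximum-degree condition (ii) is what controls the variance in these estimates. Since $t$ depends only on $\varepsilon$, it suffices to take $\gamma$ small enough and $D$ large enough for the error bookkeeping to survive $t$ rounds; this is exactly the argument of \cite[Theorem~4.7.1]{MR3524748}, and I would follow it.
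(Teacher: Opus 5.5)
The paper does not prove this theorem itself; it cites \cite[Theorem~4.7.1]{MR3524748}. Your outline is the nibble argument given there: selection probability $\varepsilon/D_i$, a fixed number $t=t(\varepsilon)$ of rounds, total cost about $\frac{\varepsilon}{1-e^{-\varepsilon}}\frac nr$, and leftovers covered by one edge each.

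One step fails as written: defining $\mathcal H_{i+1}$ by the traces $e\cap V_{i+1}$. If $x\in V_{i+1}$, no edge through $x$ was selected. So every edge through $x$ survives as a trace containing $x$, not only the roughly $e^{-\varepsilon(r-1)}d_i(x)$ edges lying inside $V_{i+1}$. You would then select many traces of size less than $r$, each covering fewer than $r$ new vertices, and the per-round cost $\varepsilon|V_i|/r$ is no longer justified. In the limit of small selection probabilities, selecting traces is just the random greedy cover, which uses about $(1+\frac12+\cdots+\frac1r)\frac nr$ edges. Your own estimate ``each surviving edge through $x$ keeps all its other vertices with probability about $e^{-\varepsilon(r-1)}$'' already presupposes the right definition. $\mathcal H_{i+1}$ must be the \emph{induced} subhypergraph, consisting of the edges of $\mathcal H_i$ contained in $V_{i+1}$. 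It stays $r$-uniform, so there is no loss to handle at the end; vertices left with no edges are simply leftovers.

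With that definition, much of your ``main obstacle'' disappears. Since $\mathcal H_i\subseteq\mathcal H$, degrees and codegrees can only decrease. With $D_i=De^{-\varepsilon(r-1)i}$ you therefore get $d_i(x)<kD=ke^{\varepsilon(r-1)i}D_i$ and $d_i(x,y)\le d(x,y)<\gamma e^{\varepsilon(r-1)i}D_i$ for free. Conditions (ii) and (iii) thus propagate with constants depending only on $k,\gamma,\varepsilon$ and $i\le t$, and codegrees never need re-estimating.

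What remains is a one-round lemma with three parts:
\begin{itemize}
\item the set $E_i'$ of edges selected in round $i$ has $|E_i'|=\frac{\varepsilon|V_i|}{r}(1\pm\delta')$;
\item $|V_{i+1}|=e^{-\varepsilon}|V_i|(1\pm\delta')$;
\item all but $\delta'|V_{i+1}|$ vertices have degree $e^{-\varepsilon(r-1)}D_i(1\pm\delta')$ in $\mathcal H_{i+1}$.
\end{itemize}
Chebyshev's inequality suffices for this. The covariances are controlled by (iii), not (ii). Condition (ii) is what bounds the number of edges through exceptional vertices, hence $|E_i|$ and the number of vertices whose new degree they can spoil. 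Exceptional vertices need not go into a garbage set; they stay in the hypergraph and are covered with the other leftovers.

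You also do not need the errors to grow by a bounded factor per round. Since $t$ is fixed, pick $\delta_t$ first. Then choose each $\delta_{i-1}$ as the input tolerance the one-round lemma needs, with maximum-degree constant $ke^{\varepsilon(r-1)(i-1)}$, to output $\delta_i$. Finally take $\gamma\le\min_{0\le i\le t}\delta_ie^{-\varepsilon(r-1)i}$, and $d_0$ so large that $D_t$ exceeds every threshold.
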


We also need a small amount of notation for burst insertions and deletions.
For a sequence $x$, let $\BI_b(x)$ be the \emph{$b$-burst-insertion ball} centered at $x$, consisting of all sequences obtainable from $x$ after a burst of exactly $b$ insertions. Similarly, let $\BD_b(x)$ be the \emph{$b$-burst-deletion ball} centered at $x$.

\begin{lemma}\cite[Theorem 1]{Sun-25-IT}\label{lem:burst-insertion-size}
For any $q\ge2$ and $x\in\Sigma_q^n$,
\[
   |\BI_b(x)|=q^{b-1}\bigl((q-1)(n+1)+1\bigr).
\]
\end{lemma}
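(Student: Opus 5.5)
The idea is to count the $b$-burst insertions by choosing one canonical description for each resulting word.

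A word $y\in\BI_b(x)$ has the form
\[
y=x_1\cdots x_i\,w_1\cdots w_b\,x_{i+1}\cdots x_n
\]
for some position $i\in[0,n]$ and some burst $w\in\Sigma_q^b$. Write $y=\phi(i,w)$. Then $\BI_b(x)$ is the image of $\phi$ on $[0,n]\times\Sigma_q^b$. I will pick the canonical representative to be the pair with the \emph{largest} insertion position, and then count the canonical pairs.

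\textbf{Step 1: when a representation can be shifted right.} For $i<n$, I will check that $\phi(i,w)=\phi(i+1,w')$ for some $w'$ if and only if $w_1=x_{i+1}$. In that case $w'=w_2\cdots w_b\,x_{i+1}$. This follows by comparing the two words coordinate by coordinate: they agree on the first $i$ symbols and on the last $n-i-1$ symbols. What remains is the requirement $w_1=x_{i+1}$ together with the determination of $w'$.

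\textbf{Step 2: call $(i,w)$ canonical if $i=n$, or if $i<n$ and $w_1\neq x_{i+1}$.} Two facts need to be shown.
\begin{enumerate}[label=(\alph*),leftmargin=2em]
\item Every $y$ in the ball has a canonical representation. Start from any representation of $y$ and apply Step 1 repeatedly to shift it right. This must stop, because the position is bounded by $n$, and the pair where it stops is canonical.
\item Distinct canonical pairs give distinct words. Suppose $\phi(i,w)=\phi(j,w')=y$.
\begin{itemize}[leftmargin=2em]
\item If $i=j$, then $w=w'$, since $w$ occupies fixed coordinates of $y$.
\item If $i<j$, then $i<n$ and $i+1\le j$. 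The second representation gives $y_{i+1}=x_{i+1}$, and the first gives $y_{i+1}=w_1$. So $w_1=x_{i+1}$, which contradicts the canonicity of $(i,w)$.
\end{itemize}
\end{enumerate}
This injectivity step is the only point that needs care. The trick is to compare the two representations at the single coordinate $i+1$, which avoids any chain or run-structure argument.

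\textbf{Step 3: count.} Position $i=n$ contributes $q^b$ canonical pairs. Each of the $n$ positions $i<n$ contributes $(q-1)q^{b-1}$ canonical pairs. Hence
\[
|\BI_b(x)|=q^b+n(q-1)q^{b-1}=q^{b-1}\bigl((q-1)(n+1)+1\bigr),
\]
which is independent of $x$, as claimed.
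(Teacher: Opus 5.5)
Your proof is correct. The paper gives no proof of this lemma. It imports the count directly from \cite[Theorem 1]{Sun-25-IT} and uses it only for $q=b=2$, to get the $2n$ neighbors of each $y\in\Sigma_2^{n-2}$ in the proof of Theorem~\ref{thm:binary-burst-cover}. So there is no in-paper argument to compare against, and your proposal serves as a self-contained replacement for the citation.

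The rightmost-position normal form works cleanly:
\begin{itemize}
\item The shifting rule of Step 1 gives existence of a canonical representation.
\item Comparing the two representations at the single coordinate $i+1$ gives uniqueness. This avoids analyzing when $w\,x_{i+1}\cdots x_j$ and $x_{i+1}\cdots x_j\,w'$ coincide.
\item Only the \emph{if} direction of Step 1 is actually used.
\end{itemize}

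Your count also yields a structural by-product. The canonical pairs are exactly a canonical single insertion $(i,w_1)$, meaning $i=n$ or $w_1\neq x_{i+1}$, followed by an arbitrary tail $w_2\cdots w_b$. Hence $|\BI_b(x)|=q^{b-1}\,|\mathcal I_1(x)|$, which agrees with \eqref{eq:insball} for $R=1$. This explains both the factor $q^{b-1}$ and the independence from $x$.
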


Finally, we recall differential VT codes, which provide the main explicit ingredients in the deletion-covering constructions below. For $x=(x_1,x_2,\ldots,x_n)\in\Sigma_q^n$, define $\Diff(x)\in\Sigma_q^n$ by
\[
\Diff(x)_i=
\begin{cases}
 x_{i+1}-x_i \pmod q, & 1\le i\le n-1,\\
 x_n, & i=n.
\end{cases}
\]
For $u=(u_1,\ldots,u_n)\in\Sigma_q^n$, define its VT syndrome by
\[
   \Syn(u)=\sum_{i=1}^n i u_i.
\]
For any $a\in[0,qn-1]$, the \emph{$q$-ary differential VT code} is
\[
\DVT_a(n,q)=\{x\in\Sigma_q^n:\ \Syn(\Diff(x))\equiv a\pmod{qn}\}.
\]
The following facts are known.

\begin{lemma}\cite[Theorem 3]{Nguyen-24-IT}\label{lem:dvt-single-del}
For any $q\ge2$ and $a\in[0,qn-1]$, the code $\DVT_a(n,q)$ is a $q$-ary single-deletion correcting code.
\end{lemma}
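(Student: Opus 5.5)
We must show that $\DVT_a(n,q)$ corrects a single deletion: if $x,x'\in\DVT_a(n,q)$ share a common subsequence $y$ of length $n-1$, then $x=x'$. The plan is to move the problem to the differential domain and reduce it to the classical $q$-ary VT-type recovery argument.

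\textbf{Step 1: effect of a deletion on $\Diff$.} Let $u=\Diff(x)$ and let $y$ be obtained from $x$ by deleting $x_i$.
\begin{itemize}[leftmargin=2em]
\item If $i<n$, then $\Diff(y)$ is obtained from $u$ by replacing the adjacent pair $(u_{i-1},u_i)$ with their sum $u_{i-1}+u_i \pmod q$. Here $x_{i+1}-x_{i-1}=(x_{i+1}-x_i)+(x_i-x_{i-1})$, and for $i=1$ the symbol $u_1$ is simply dropped.
\item If $i=n$, the last two entries $(u_{n-1},u_n)$ are replaced by $u_{n-1}+u_n$ computed over $\mathbb{Z}$ reduced mod $q$, consistent with the convention $\Diff(x)_n=x_n$.
\end{itemize}
So in every case a single deletion in $x$ becomes a merge of two adjacent symbols of $u$, or a drop of $u_1$. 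Deletions within a run of $x$ correspond to a zero entry of $u$ being absorbed, which is consistent.

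\textbf{Step 2: syndrome change.} Suppose the merge happens at positions $(i-1,i)$ and write $v=\Diff(y)$.
\begin{itemize}[leftmargin=2em]
\item Every entry $u_j$ with $j>i$ moves one position to the left, lowering the syndrome by $\sum_{j>i}u_j$.
\item The merged entry sits at position $i-1$, contributing $(i-1)\bigl((u_{i-1}+u_i)\bmod q\bigr)$ in place of $(i-1)u_{i-1}+iu_i$.
\end{itemize}
Hence
\[
\Syn(u)-\Syn(v)=\sum_{j\ge i}u_j+(i-1)q\,\epsilon,\qquad \epsilon\in\{0,1\},
\]
where $\epsilon$ is the wrap-around indicator. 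Reducing modulo $qn$, the decoder knows the deficiency $\Delta=a-\Syn(v)\bmod qn$.

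\textbf{Step 3: injectivity and uniqueness.} The function $i\mapsto \sum_{j\ge i}u_j+(i-1)q\epsilon_i$ lies in $[0,qn-1]$, since $\sum_j u_j\le (q-1)n$. To show uniqueness, take two deletion positions $i<i'$ in $x$ and $x'$ that yield the same $y$ and the same $\Delta$. Compare $x$ and $x'$, which agree outside the window $[i,i']$ up to a shift. The standard argument then exhibits a strict monotone quantity, namely the partial sums of $u$ across the window combined with the $q$-multiples, forcing $\Delta\ne\Delta'$ unless $x=x'$. I would therefore normalize to the case where $x$ and $x'$ differ by a shift on the window: $x_{i+1..i'}=x'_{i..i'-1}$.

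The main obstacle is Step 3, the handling of the modular wrap-arounds $\epsilon$. I would first try rewriting $\Syn(\Diff(x))$ as $\sum_i x_i\cdot(\text{const}) \bmod q$ plus carries, i.e.\ express $\Syn(\Diff(x))\equiv \sum_{i}x_i + q\cdot(\text{number of descents weighted by position})\pmod{qn}$. This reduces the statement to a known $q$-ary VT code with descent weights, in the spirit of Tenengolts' code, where the moving-run argument applies directly.
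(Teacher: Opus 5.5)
\textbf{Step~1 fails at $i=n$.} For the definition of $\Diff$ printed in this paper, your Step~1 is wrong in the case $i=n$. With $u_i=x_{i+1}-x_i$ for $i<n$ and $u_n=x_n$, deleting $x_n$ leaves the last entry $x_{n-1}=u_n-u_{n-1}\bmod q$, not $u_{n-1}+u_n$.

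This is not cosmetic: with that definition the statement is false for some $q\ge3$. For $q=3$, $n=2$, we have $\Diff(0,0)=(0,0)$ and $\Diff(0,2)=(2,2)$, both with syndrome $\equiv 0\pmod{6}$, and both words contain the subsequence $0$. More generally, for $n\ge2$ and $q=2n-1$, the words $(0,\dots,0)$ and $(0,\dots,0,n)$ both lie in $\DVT_0(n,q)$ and share $n-1$ zeros.

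Your uniform merge rule is exactly right for the convention $\Diff(x)_i=x_i-x_{i+1}\bmod q$ ($i<n$), $\Diff(x)_n=x_n$, i.e.\ $x_i\equiv\sum_{j\ge i}u_j\pmod q$. This is presumably the convention of the cited source. You must fix it explicitly, and justify the interior case by $x_{i-1}-x_{i+1}=(x_{i-1}-x_i)+(x_i-x_{i+1})$. Under that convention your Step~2 formula is correct. So is the range claim $\Delta\in[0,qn-1]$, which needs $\sum_{j\ge i}u_j\le(q-1)(n-i+1)$ to absorb the $(i-1)q$ term. Hence $\Delta$ is known as an integer.

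\textbf{The real gap is Step~3.} Step~3 carries the entire content of the lemma, and it is not proved. You appeal to ``the standard argument'' and an unspecified monotone quantity, never relate $\Diff(x)$ to $\Diff(x')$ across the window, and explicitly leave the carries $\epsilon$ unresolved. Your closing idea is the right one, but it is misstated and not carried out. Writing $u_i=x_i-x_{i+1}+q\,[x_i<x_{i+1}]$ (Iverson bracket) and telescoping gives the exact integer identity
\[
\Syn(\Diff(x))=\sum_{i=1}^n x_i+q\,A(x),\qquad A(x)=\sum_{i=1}^{n-1} i\,[x_i<x_{i+1}],
\]
which involves weighted \emph{ascents}, not descents.

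If $y$ arises from $x$ by deleting a symbol $s$, then $\Delta=s+q\,(A(x)-A(y))$ with $0\le A(x)-A(y)\le n-1$. So $\Delta\bmod q$ recovers $s$, and $\lfloor\Delta/q\rfloor$ recovers $A(x)-A(y)$. What remains to be proved is that these data determine $x$:
\begin{enumerate}[label=(\roman*),leftmargin=2em]
\item The ascent word $\beta(x)=([x_i<x_{i+1}])_{i=1}^{n-1}$ loses exactly one bit when $x$ loses one symbol; check the two comparisons around the deleted symbol. Moreover $A(x)-A(y)$ is exactly the binary VT deficiency of that bit deletion, so $\beta(x)$ is determined.
\item Suppose inserting the same $s$ into $y$ at two positions gives words with the same $\beta$. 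The affected windows are $(w_0,w_1,\dots,w_k)$ and $(w_1,\dots,w_k,w_0)$ with $w_0=s$. Equality of $\beta$ forces the cyclically consecutive comparisons $[w_0<w_1],\dots,[w_{k-1}<w_k],[w_k<w_0]$ to be all equal. A strict cycle is impossible, so they are all zero, hence all $w_j$ are equal and the two words coincide.
\end{enumerate}

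Note also that $\DVT_a(n,q)$ is not literally a Tenengolts code. Its defining condition is $\sum_i x_i\equiv a\pmod q$ together with $A(x)\equiv(a-\sum_i x_i)/q\pmod n$, which depends on the integer $\sum_i x_i$. So ``reducing to a known code'' works only through the above recovery of $s$.
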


\begin{lemma}\cite[Theorem 6]{Wang-24-IT}\label{lem:dvt-binary-burst}
For any $a\in[0,2n-1]$, the code $\DVT_a(n,2)$ is a binary two-burst-deletion correcting code.
\end{lemma}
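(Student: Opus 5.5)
We would prove Lemma~\ref{lem:dvt-binary-burst} by moving the problem to the difference domain, where a burst of two deletions becomes a short local operation on $u=\Diff(x)$, and then running a Varshamov--Tenengolts style argument on the syndrome. The uniqueness argument in the last step is only planned here, not checked.

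\textbf{Step 1: local effect of a burst.} Set $x_{n+1}=0$, so that $u_j=x_{j+1}-x_j \pmod 2$ for all $j\in[n]$ (signs do not matter over $\Sigma_2$). Let $y$ be obtained from $x$ by deleting $x_i,x_{i+1}$. For $i\ge2$, $\Diff(y)\in\Sigma_2^{n-2}$ is obtained from $u$ by replacing the window $(u_{i-1},u_i,u_{i+1})$ with the single bit $w=u_{i-1}+u_i+u_{i+1}\pmod 2$. Indeed, $y_i-y_{i-1}=x_{i+2}-x_{i-1}$. For $i=1$, the two leading bits $u_1,u_2$ are simply removed. This case can be absorbed into the general one by prepending $x_0=0$, so that $u_0=x_1$; this does not change the syndrome.

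\textbf{Step 2: the syndrome deficiency.} Put $v=\Diff(y)$ and $W=\sum_{j>i+1}u_j$, the number of ones to the right of the window. Every such one moves two positions to the left, so
\[
\Syn(u)-\Syn(v)=(i-1)u_{i-1}+iu_i+(i+1)u_{i+1}-(i-1)w+2W.
\]
The receiver knows $y$, hence $v$, and knows $a$. It therefore knows the deficiency $\Delta=a-\Syn(v)\pmod{2n}$. Using the constraint $w\equiv u_{i-1}+u_i+u_{i+1}$, we will tabulate the four possibilities for the window given $w$. The aim is to write $\Delta$ as an explicit function of $i$, the window pattern, and the weight of $v$ to the right of position $i-1$. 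We also check that the exact integer deficiency lies in $[0,2n)$, so that reducing modulo $2n$ loses nothing.

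\textbf{Step 3: uniqueness.} Suppose $x\neq x'$ both lie in $\DVT_a(n,2)$ and produce the same $y$ via bursts at positions $i<j$. Then $u$ and $u'$ are obtained from the common $v$ by expanding the bit $v_{i-1}$, respectively $v_{j-1}$, into a valid triple. Both have the same syndrome modulo $2n$, and $\Syn(u)-\Syn(u')$ is determined by the two triples and by the ones of $v$ strictly between the two sites. We would show this difference is nonzero and has absolute value less than $2n$, which gives a contradiction.

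This mirrors the classical VT proof that the deficiency is monotone in the insertion position. The catch is that each expansion adds $2$ per one passed, plus a bounded correction coming from the triple. Hence the difference is essentially $2\cdot(\text{ones between the sites})$ plus boundary terms in a small range. When there are no ones between the sites, the two expansions must be compared directly. We expect this to show that the resulting $x$ and $x'$ coincide, since a run of zeros in $v$ means the deleted burst slides along a periodic stretch of $x$ and yields the same word.

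\textbf{Main obstacle.} The hardest part is this case analysis of the boundary terms in Step 3, namely showing that the correction terms can never cancel $2\cdot(\text{ones between the sites})$ modulo $2n$ unless $x=x'$. We would first enumerate the $4\times4$ triple pairs for fixed bits $v_{i-1}$ and $v_{j-1}$, and bound the correction in each case. The edge case $i=1$ would be handled separately via the $x_0$ padding.
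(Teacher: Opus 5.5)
The paper quotes this lemma from the literature without proof, so your direct argument has to stand on its own. Steps~1 and~2 are correct. Write $R_i=\sum_{k\ge i}v_k$ for the number of ones of $v$ to the right of the site $v_{i-1}$, with $v_0=y_1$ as in your padding. Your formula then becomes $\Delta=2R_i+\delta$, where
\[
\delta=2(i-1)\,[t_1+t_2+t_3\ge 2]+t_2+2t_3\in
\begin{cases}
\{0,1,2,2i+1\} & (t=100,010,001,111;\ v_{i-1}=1),\\
\{0,2i-1,2i,2i+1\} & (t=000,110,101,011;\ v_{i-1}=0),
\end{cases}
\]
and indeed $0\le\Delta\le 2n-1$. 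The gap is Step~3, which is the entire content of the lemma. You leave it unchecked, and the heuristic behind it is false: the correction coming from the triple is \emph{not} bounded, since it reaches $2i+1$. Consequently, equal deficiencies at two different sites do not arise only when no ones of $v$ lie between them. For example, take $n=4$ and $y=(0,1)$. Inserting $01$ at the front (site $v_0$, triple $011$) and inserting $01$ at the end (site $v_2$, triple $111$) both give $x=(0,1,0,1)$ with $\Delta=7$, although $v_1=1$ lies strictly between the sites. An argument organized as ``$2\cdot(\text{ones between})$ plus boundary terms in a small range'' would either fail to close or would assert a nonzero difference in exactly these cases.

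What is needed is an exact case analysis. For sites $i<j$ put $m=R_i-R_j=\sum_{k=i}^{j-1}v_k$. Then $0\le m\le j-i$, and $m\le j-i-1$ if $v_{j-1}=0$. Equal deficiencies mean $2m+\delta_i=\delta_j$. Going through the values of $\delta_j$, the cases $\delta_j=1$, $\delta_j=2j$, and $\delta_j=2j+1$ with $v_{j-1}=0$ have no solution. The remaining solutions fall into two types:
\begin{itemize}
\item $\delta_i=\delta_j=0$ with $m=0$, or $\delta_i=0,\ \delta_j=2$ with $m=1$. Here a pair of equal symbols slides through a run of $y$.
\item $\delta_i=2i+1$ together with either $\delta_j=2j+1,\ m=j-i$, or $\delta_j=2j-1,\ m=j-i-1$. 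Here an alternating pair slides through an alternating stretch of $y$.
\end{itemize}
In each of these one checks directly that $x=x'$. You also need the case $i=j$, which you omitted; it follows because the four values of $\delta$ at a fixed site are distinct. The bound $|\Syn(u)-\Syn(u')|<2n$ that you planned to prove is automatic from Step~2, so only ``nonzero unless $x=x'$'' has to be shown. Completed this way, your route gives more than the lemma. Since $|\BI_2(y)|=2n$, the deficiency is a bijection from $\BI_2(y)$ onto $[0,2n-1]$, which also yields the covering property of Theorem~\ref{thm:binary-burst-cover} without the edge-counting argument.
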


\begin{lemma}\cite[Corollary 3]{Babu-25-ISIT}\label{lem:dvt-size-div}
If each prime divisor of $n$ is also a divisor of $q$, then, for any $a\in[0,nq-1]$,
\[
   |\DVT_a(n,q)|=\frac{q^{n-1}}{n}.
\]
\end{lemma}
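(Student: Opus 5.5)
The plan is to reduce to counting solutions of a linear congruence and then evaluate that count with additive characters modulo $qn$. The condition on $n$ is what makes every nontrivial character sum vanish.

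\emph{Reduction.} First I observe that $\Diff:\Sigma_q^n\to\Sigma_q^n$ is a bijection. Given $u=\Diff(x)$, we recover $x$ by $x_n=u_n$ and $x_i=x_{i+1}-u_i \pmod q$ for $i=n-1,\ldots,1$. Hence $|\DVT_a(n,q)|$ equals the number $N_a$ of vectors $u\in\Sigma_q^n$ with $\sum_{i=1}^n i u_i\equiv a\pmod{qn}$. Put $\omega=e^{2\pi\sqrt{-1}/(qn)}$ and $S(s)=\sum_{v=0}^{q-1}\omega^{sv}$. Orthogonality of characters gives
\[
N_a=\frac{1}{qn}\sum_{t=0}^{qn-1}\omega^{-ta}\prod_{i=1}^n S(ti).
\]
The $t=0$ term equals $q^n/(qn)=q^{n-1}/n$. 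So it suffices to show that for every $t\in[1,qn-1]$ some factor $S(ti)$ with $i\in[1,n]$ vanishes.

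\emph{When a factor vanishes.} By the geometric series, $S(s)=0$ exactly when $\omega^s\neq1$ and $\omega^{sq}=1$, i.e., when $n\mid s$ but $qn\nmid s$. We therefore need, for each $t\in[1,qn-1]$, an index $i\in[1,n]$ with $n\mid ti$ and $qn\nmid ti$. I split into two cases.

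\emph{Case 1: $q\nmid t$.} Take $i=n$. Then $n\mid tn$, and $qn\nmid tn$ because $q\nmid t$.

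\emph{Case 2: $q\mid t$.} Write $t=qt'$ with $t'\in[1,n-1]$. The requirement becomes $n\mid qt'i$ and $n\nmid t'i$. Let $h=n/\gcd(t',n)$; since $t'<n$, we have $h>1$. The condition $n\mid qt'i$ is then equivalent to $h\mid qi$, and $n\nmid t'i$ is equivalent to $h\nmid i$. Choose $i=h/\gcd(h,q)$, which satisfies $h\mid qi$. By hypothesis every prime divisor of $h$ (hence of $n$) divides $q$, so $\gcd(h,q)>1$. Thus $i<h$, which gives $h\nmid i$, and also $i\le n$. This is the only step that uses the prime-divisor hypothesis. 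Without it, Case 2 can fail: if $\gcd(h,q)=1$, every $i$ with $h\mid qi$ is a multiple of $h$.

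With a vanishing factor found for every $t\neq0$, all nontrivial terms drop out, and $N_a=q^{n-1}/n$ for every $a$. I expect the main obstacle to be Case 2, namely finding the right index $i$; the rest is routine character-sum bookkeeping.
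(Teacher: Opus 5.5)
Your proof is correct and complete. The paper gives no proof of this lemma: it quotes the result from \cite[Corollary 3]{Babu-25-ISIT}, so there is no in-paper argument to compare against, and yours serves as a self-contained replacement.

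The key steps all hold:
\begin{itemize}
\item Bijectivity of $\Diff$ reduces the problem to counting $u\in\Sigma_q^n$ with $\sum_i iu_i\equiv a\pmod{qn}$, where the $u_i$ are the integer representatives in $[0,q-1]$. This is exactly what your $S(s)=\sum_{v=0}^{q-1}\omega^{sv}$ encodes.
\item The criterion $S(s)=0\iff n\mid s,\ qn\nmid s$ is right.
\item Case 1 needs no hypothesis.
\item In Case 2, writing $n=gh$ with $g=\gcd(t',n)$ correctly turns the conditions into $h\mid qi$ and $h\nmid i$. The choice $i=h/\gcd(h,q)$ satisfies $1\le i<h\le n$ precisely because $h>1$ and every prime factor of $h$ divides $q$.
\end{itemize}

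Your approach explains directly why the count is independent of $a$: every nontrivial Fourier coefficient vanishes.

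The same framework also recovers Lemma~\ref{lem:dvt-size-prime}. Suppose $n$ is prime and $n\nmid q$. By Case 1, the only surviving terms are $t=qt'$ with $t'\in[1,n-1]$. For such $t$:
\begin{itemize}
\item The factor $i=n$ equals $q$.
\item The factors $i\in[1,n-1]$ multiply to $1$, because both $t'i$ and $qt'i$ run over all nonzero residues modulo $n$, so the geometric-series numerators and denominators cancel.
\end{itemize}
Each surviving term is therefore $q\,e^{-2\pi\sqrt{-1}\,t'a/n}$. Summing over $t'$ gives $|\DVT_a(n,q)|=\frac{q^{n-1}-1}{n}+1$ if $n\mid a$, and $\frac{q^{n-1}-1}{n}$ otherwise.
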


\begin{lemma}\cite[Corollary 7]{Babu-25-ISIT}\label{lem:dvt-size-prime}
Suppose that $n$ is a prime that does not divide $q$. Then, for any $a\in[0,nq-1]$,
\[
 |\DVT_a(n,q)|=
 \begin{cases}
  \frac1n(q^{n-1}-1)+1, & n\mid a,\\
  \frac1n(q^{n-1}-1), & n\nmid a.
 \end{cases}
\]
\end{lemma}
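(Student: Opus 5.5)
The plan is to reduce the count to an exponential sum modulo $n$ and evaluate it with roots of unity, using that $n$ is prime and $\gcd(n,q)=1$.

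\emph{Step 1: pass from $x$ to $u=\Diff(x)$.} The map $\Diff:\Sigma_q^n\to\Sigma_q^n$ is a bijection. Given $u$, we recover $x$ by setting $x_n=u_n$ and then $x_i=x_{i+1}-u_i \pmod q$ for $i=n-1,\ldots,1$. Hence $|\DVT_a(n,q)|$ equals the number of $u\in\Sigma_q^n$ with $\Syn(u)\equiv a\pmod{qn}$.

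\emph{Step 2: split the modulus by the Chinese remainder theorem.} Since $n$ is a prime not dividing $q$, we have $\gcd(n,q)=1$. So the condition $\Syn(u)\equiv a \pmod{qn}$ is equivalent to the pair of conditions
\[
\sum_{i=1}^{n-1} i u_i\equiv a \pmod n
\quad\text{and}\quad
\sum_{i=1}^{n} i u_i\equiv a \pmod q .
\]
In the first condition the term $n u_n$ vanishes modulo $n$. In the second, the coefficient $n$ of $u_n$ is invertible modulo $q$. Therefore, for each fixed $(u_1,\ldots,u_{n-1})$, exactly one $u_n\in\Sigma_q$ satisfies the second condition. It follows that
\[
|\DVT_a(n,q)|=N:=\Bigl|\Bigl\{(u_1,\ldots,u_{n-1})\in\Sigma_q^{n-1}:\ \sum_{i=1}^{n-1} i u_i\equiv a\pmod n\Bigr\}\Bigr|.
\]

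\emph{Step 3: evaluate $N$ with characters.} Let $\omega=e^{2\pi \mathrm{i}/n}$ and $S(s)=\sum_{v=0}^{q-1}\omega^{sv}$. Orthogonality of characters gives
\[
N=\frac1n\sum_{t=0}^{n-1}\omega^{-ta}\prod_{i=1}^{n-1}S(ti).
\]
The term $t=0$ contributes $q^{n-1}$.

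For $t\neq0$, primality of $n$ means that $ti$ runs over all nonzero residues modulo $n$ as $i$ does. So the product equals $\prod_{s=1}^{n-1}S(s)$. For $s\not\equiv0$ we have $\omega^s\neq1$, and summing the geometric series gives
\[
\prod_{s=1}^{n-1}S(s)=\prod_{s=1}^{n-1}\frac{1-\omega^{sq}}{1-\omega^{s}}.
\]
This product equals $1$, because $s\mapsto sq$ permutes the nonzero residues modulo $n$ (as $\gcd(q,n)=1$). Hence
\[
N=\frac1n\Bigl(q^{n-1}+\sum_{t=1}^{n-1}\omega^{-ta}\Bigr).
\]
The remaining sum equals $n-1$ if $n\mid a$ and $-1$ otherwise. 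This yields exactly the two cases in the statement.

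\emph{Main obstacle.} The only delicate point is the evaluation $\prod_{s}S(s)=1$ in Step 3. It uses both hypotheses: primality of $n$, so that every $t\ne0$ gives the same product, and $n\nmid q$, so that multiplication by $q$ is a permutation of the nonzero residues. The reduction of $u_n$ in Step 2 also uses $n\nmid q$. The rest is routine bookkeeping.
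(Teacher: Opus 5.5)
Your proof is correct and complete. The paper gives no argument for this lemma; it is imported from \cite[Corollary 7]{Babu-25-ISIT}. There is therefore no in-paper proof to compare with, and your derivation works as a self-contained justification.

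Each step holds up. $\Diff$ is a bijection on $\Sigma_q^n$. The CRT split uses $\gcd(n,q)=1$ twice: to separate the two moduli, and to solve uniquely for $u_n\in[0,q-1]$. In Step~3 the ratio of products is legitimately $1$, because $\prod_{s=1}^{n-1}(1-\omega^{sq})=\prod_{s=1}^{n-1}(1-\omega^{s})=n\neq 0$. As a sanity check, both resulting values are integers by Fermat's little theorem.

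Your route also shows exactly where primality enters. For composite $n$ coprime to $q$, the same computation gives $\prod_{i=1}^{n-1}S(ti)=q^{\gcd(t,n)-1}$, where $\gcd(0,n)=n$. Hence $|\DVT_a(n,q)|=\frac1n\sum_{d\mid n}q^{d-1}c_{n/d}(a)$, with $c_m$ the Ramanujan sum, and this reduces to the stated formula when $n$ is prime. By contrast, the regime of Lemma~\ref{lem:dvt-size-div}, where $\gcd(n,q)>1$, is out of reach of this method, since your Step~2 no longer applies there.
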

\section{Non-equivalence of Insertion-Covering and Deletion-Covering Codes}\label{sec:noneq}

We begin with a structural distinction between correction and covering. This distinction explains why insertion-covering and deletion-covering codes should not be expected to behave symmetrically.

For error-correcting codes in the Levenshtein metric, correcting a fixed number of deletions is equivalent to correcting the same number of insertions.
Indeed, the deletion correction and insertion correction are governed by the same longest-common-subsequence condition.

For covering codes, the situation is different. An $R$-insertion-covering code $C\subseteq\Sigma_q^n$ covers the longer space $\Sigma_q^{n+R}$ by asking that every longer word contains a codeword as a subsequence, whereas an $R$-deletion-covering code covers the shorter space $\Sigma_q^{n-R}$ by asking that every shorter word is a subsequence of some codeword. Thus the two notions are inherently asymmetric. We make this asymmetry precise by showing that neither implication holds.
For a symbol $a\in\Sigma_q$, let $a^n$ denote the length-$n$ word consisting entirely of $a$.
For two words $x$ and $y$, let $xy$ denote their concatenation.

\begin{theorem}\label{thm:del-not-ins}
For every $q\ge 2$, every $R\ge 1$, and every $n>R$, let $C'\subseteq\Sigma_q^n$ be an $R$-deletion covering code.
Then, for any two distinct symbols $a,b\in\Sigma_q$, the code
\[
   C:=\bigl(C'\cup \{a^{n-R}b^R\}\bigr)\setminus \{a^n\}
\]
is an $R$-deletion covering code but is not an $R$-insertion covering code.
\end{theorem}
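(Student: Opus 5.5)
Two things need to be shown: $C$ is still an $R$-deletion covering code, and $C$ fails to be an $R$-insertion covering code.

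\emph{Deletion covering.} Fix $y\in\Sigma_q^{n-R}$. By hypothesis there is some $x\in C'$ with $y\preceq x$. If $x\neq a^n$, then $x\in C$ and we are done. Suppose instead $x=a^n$. Then $y$ is a subsequence of $a^n$ of length $n-R$, so $y=a^{n-R}$. This word is a subsequence of $a^{n-R}b^R$: delete the $R$ trailing $b$'s. Since $a^{n-R}b^R\in C$, the word $y$ is covered in this case too. Hence $\bigcup_{x\in C}\mathcal D_R(x)=\Sigma_q^{n-R}$. Note that $a^{n-R}b^R\neq a^n$ because $R\ge1$ and $b\ne a$, so the removal of $a^n$ does not delete the new word.

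\emph{Not insertion covering.} Consider the word $y=a^{n+R}\in\Sigma_q^{n+R}$. If $x\in\Sigma_q^n$ satisfies $x\preceq y$, then every symbol of $x$ is $a$, so $x=a^n$. Since $a^n\notin C$ by construction, no codeword of $C$ covers $y$ under $R$ insertions. Hence $C$ is not an $R$-insertion covering code.

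There is no real obstacle here. The only point needing care is to check that the removed word $a^n$ was the unique center covering the word it alone might have been needed for, namely $a^{n-R}$. This is ensured by the fact that $a^n$ has only one subsequence of length $n-R$, and that subsequence is also covered by the added word $a^{n-R}b^R$. The hypothesis $n>R$ merely guarantees that $n-R\ge1$, so the spaces involved are nontrivial.
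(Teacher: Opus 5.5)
Your proof is correct and matches the paper's argument essentially verbatim: the only center that can be lost is $a^n$, whose unique length-$(n-R)$ subsequence $a^{n-R}$ is still covered by $a^{n-R}b^R$, and the word $a^{n+R}$ shows that insertion covering fails. Your explicit check that $a^{n-R}b^R\neq a^n$, so the added word survives the removal, is a small detail the paper leaves implicit.
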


\begin{proof}
We first prove that $C$ is still an $R$-deletion covering code.
Let $y\in\Sigma_q^{n-R}$. Since $C'$ is an $R$-deletion covering code, there exists $c'\in C'$ such that $y\preceq c'$.
If $c'\neq a^n$, then $c'\in C$, and hence $y$ is covered by $C$.
It remains to consider the case $c'=a^n$. In this case, every length-$(n-R)$ subsequence of $a^n$ is equal to $a^{n-R}$, so necessarily
\[
   y=a^{n-R}.
\]
But
\[
   a^{n-R}\preceq a^{n-R}b^R,
\]
and the word $a^{n-R}b^R$ belongs to $C$. Therefore $y$ is also covered by $C$. This proves that $C$ is an $R$-deletion covering code.

On the other hand, $C$ is not an $R$-insertion covering code. Indeed, consider the word
\[
   z=a^{n+R}\in\Sigma_q^{n+R}.
\]
The only length-$n$ subsequence of $z$ is $a^n$. Since $a^n\notin C$, there is no codeword of $C$ that is a subsequence of $z$. Hence $z$ is not covered by $C$ under $R$ insertions, and so $C$ is not an $R$-insertion covering code.
\end{proof}

The reverse implication is less immediate. We prove it by constructing an explicit binary code family. The key idea is to forbid a carefully chosen short subsequence. This destroys deletion covering, while an extremal subsequence argument shows that insertion covering still holds.

\begin{theorem}\label{thm:main-noneq}
For every integer $R\ge 1$, let
\[
   C_R=\{c\in\{0,1\}^{R+4}: 0110\not\preceq c\}.
\]
Then $C_R$ is an $R$-insertion-covering code but is not an $R$-deletion-covering code.
\end{theorem}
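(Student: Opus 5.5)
The plan is to verify the two claims separately. Throughout, $n=R+4$: deletion covering concerns $\{0,1\}^{4}$, and insertion covering concerns $\{0,1\}^{2R+4}$.

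\emph{Failure of deletion covering.} Here $n-R=4$. The word $y=0110\in\{0,1\}^4$ is a subsequence of no codeword, by the very definition of $C_R$. So $y$ is uncovered, and this part is immediate.

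\emph{Insertion covering.} Fix $z\in\{0,1\}^{2R+4}$ with $k$ ones and $m$ zeros, so $k+m=2R+4$. We must find a subsequence of $z$ of length exactly $R+4$ that avoids $0110$. It suffices to find such a subsequence of length at least $R+4$, since every subsequence of a $0110$-free word is again $0110$-free.

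The key structural observation is a characterization of $0110$-free words. A word $w$ avoids $0110$ if and only if at most one $1$ of $w$ lies strictly between its first and last $0$. Indeed, an occurrence of $0110$ is exactly two $1$'s with a $0$ before both of them and a $0$ after both.

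With this in hand, we split into cases.
\begin{enumerate}[label=(\roman*),leftmargin=2em]
  \item If $k\ge1$, take all zeros together with one $1$. This has length $m+1$ and has a single $1$, so it avoids $0110$. Its length is at least $R+4$ when $m\ge R+3$. If $k=0$, then $m=2R+4$ and we simply take $0^{R+4}$.
  \item If $m\ge1$, take all ones together with one $0$. This has a single $0$, so it is $0110$-free, and its length is at least $R+4$ when $k\ge R+3$. If $m=0$, then $k=2R+4$ and we simply take $1^{R+4}$.
\end{enumerate}
If neither case applies, then $k,m\le R+2$, and since $k+m=2R+4$ this forces $k=m=R+2$.

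\emph{The balanced case $k=m=R+2$.} This is the main step. We look for two zeros of $z$ separated by at most one $1$, that is, a factor $00$ or a factor $010$ of $z$. In that case, take all $k=R+2$ ones together with the two zeros of that factor. The result is $0110$-free, since at most one $1$ sits between its only two zeros, and it has length $R+4$.

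It remains to show that such a factor always exists. Suppose not. Then every $0$-run of $z$ has length $1$, and every $1$-run lying between two zeros has length at least $2$. Hence
\[
z=1^{a_0}\,0\,1^{a_1}\,0\cdots 0\,1^{a_{R+2}},
\]
with $R+2$ zeros and $a_i\ge2$ for $1\le i\le R+1$. This gives $k\ge 2(R+1)>R+2$, a contradiction.

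I expect the only real obstacle to be finding the right family of $0110$-free words to extract from $z$, namely the characterization above. Once that is available, the counting is short.
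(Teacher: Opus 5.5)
Your proof is correct and is essentially the paper's argument, presented directly rather than by contradiction. The paper likewise reduces to the balanced case $N_0=N_1=R+2$ (otherwise a word with at most one $0$ or at most one $1$ survives the deletions), and then obtains the same contradiction $N_1\ge 2(R+1)$ by requiring at least two ones between every pair of consecutive zeros; this is exactly your argument that a factor $00$ or $010$ must exist.
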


\begin{proof}
We first show that $C_R$ is not an $R$-deletion-covering code.
Suppose that $C_R$ is an $R$-deletion-covering code. The word $u=0110$ must be covered by some codeword $c\in C_R$.
This would require $u\preceq c$, contradicting the definition
of $C_R$.

It remains to prove that $C_R$ is an $R$-insertion-covering
code. For any $z\in\{0,1\}^{(R+4)+R}=\{0,1\}^{2R+4}$, we shall show that there exists a set
$E\subseteq[2R+4]$ of size $R$ such that $c=z_{[2R+4]\setminus E}$ does not contain $0110$ as a subsequence. This will imply that $c\in C_R$ and $c\preceq z$.

Suppose, for contradiction, that every subsequence obtained
from $z$ by deleting exactly $R$ coordinates still contains
$0110$ as a subsequence.
Let $N_0$ and $N_1$ denote the numbers of zeros and ones in
$z$, respectively. We first claim that
\[
   N_0\ge R+2
   \qquad\text{and}\qquad
   N_1\ge R+2.
\]
Indeed, if $N_0\le R+1$, then by deleting all zeros and, if
necessary, additional coordinates, we obtain a word with at
most one zero, which cannot contain $0110$ as a subsequence.
This contradicts our assumption. The same argument applied to
ones yields $N_1\ge R+2$.
Since $|z|=2R+4$, it follows that
\[
   N_0=N_1=R+2.
\]
Let
\[
   p_1<p_2<\cdots<p_{R+2}
\]
be the positions of the zeros in $z$. Fix two indices
$1\le i<j\le R+2$. Delete all zeros except those at positions
$p_i$ and $p_j$. Since this deletes exactly
\[
   (R+2)-2=R
\]
positions, the resulting word must contain $0110$ as a subsequence by our assumption.
However, the resulting word contains exactly two zeros, namely the zeros originally at positions $p_i$ and $p_j$.
Therefore these two zeros must serve as the first and last symbols of an occurrence of $0110$.
It follows that there are at least two ones between $p_i$ and $p_j$ in the original word $z$.
In particular, between every adjacent pair of zero positions
$p_i<p_{i+1}$, $1\le i\le R+1$, there are at least two ones.
Since these $R+1$ intervals are pairwise disjoint, we obtain
\[
   N_1\ge 2(R+1)=2R+2.
\]
This contradicts $N_1=R+2$.
Therefore, there exists a set $E\subseteq[2R+4]$ with
$|E|=R$ such that $z_{[2R+4]\setminus E}$ avoids $0110$ as a
subsequence. Hence there exists a codeword $c\in C_R$ such
that $c\preceq z$. Since $z$ was arbitrary, every word in
$\{0,1\}^{2R+4}$ is covered by $C_R$ under $R$ insertions.
Thus $C_R$ is an $R$-insertion-covering code.
\end{proof}

\begin{remark}
The construction above is deliberately small. The same obstruction can be lifted to arbitrarily large block lengths by appending unrestricted suffixes: for $n\ge R+4$, let $L=n-R-4$ and take
\[
   \{cv:\ c\in\{0,1\}^{R+4},\ 0110\not\preceq c,\ v\in\{0,1\}^L\}.
\]
The same argument applied to the first $2R+4$ positions gives insertion covering, while the word $0110\,0^L$ is not covered by deletions. Hence the non-equivalence is not a small-length artifact.
\end{remark}

Combining Theorems~\ref{thm:del-not-ins} and~\ref{thm:main-noneq} yields the promised non-equivalence.

\begin{corollary}\label{cor:noneq}
For every $R\ge1$, the properties of being an $R$-deletion-covering code and being an $R$-insertion-covering code are not equivalent. Neither property implies the other.
\end{corollary}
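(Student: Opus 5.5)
The plan is to obtain Corollary~\ref{cor:noneq} directly from the two preceding theorems. Each theorem supplies a witness for one of the two non-implications, and the only real work is to make sure the witnesses exist for every $R\ge1$.

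\emph{Deletion covering does not imply insertion covering.} Here I would invoke Theorem~\ref{thm:del-not-ins} with $q=2$ (any $q\ge2$ works), the given $R$, any $n>R$, and distinct symbols $a=0$, $b=1$. The theorem takes an $R$-deletion covering code $C'$ as input, so I need one to exist. The trivial choice $C'=\Sigma_q^n$ suffices: every $y\in\Sigma_q^{n-R}$ satisfies $y\preceq y\,0^R$, and $y\,0^R\in C'$. Theorem~\ref{thm:del-not-ins} then outputs a code $C=\bigl(C'\cup\{a^{n-R}b^R\}\bigr)\setminus\{a^n\}$. This $C$ is an $R$-deletion covering code but not an $R$-insertion covering code, which shows that deletion covering does not imply insertion covering.

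\emph{Insertion covering does not imply deletion covering.} For this direction, Theorem~\ref{thm:main-noneq} already gives, for every $R\ge1$, the binary code $C_R\subseteq\{0,1\}^{R+4}$. It is an $R$-insertion covering code but not an $R$-deletion covering code. If one also wants the witnesses in both directions to share the same length $n$, I would take $n=R+4$ in the first step, which is allowed since $R+4>R$. For larger lengths, the Remark after Theorem~\ref{thm:main-noneq} handles every $n\ge R+4$.

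Putting the two directions together, neither property implies the other for any $R\ge1$, which is the claim. I expect no real obstacle. The one point needing care is that Theorem~\ref{thm:del-not-ins} is conditional on having some deletion covering code $C'$, and the full space $\Sigma_q^n$ discharges that condition trivially.
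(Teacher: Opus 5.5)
Your proposal is correct and matches the paper, which derives the corollary by combining Theorem~\ref{thm:del-not-ins} (deletion covering without insertion covering) and Theorem~\ref{thm:main-noneq} (insertion covering without deletion covering). Your explicit choice $C'=\Sigma_q^n$ to supply the deletion covering code required by Theorem~\ref{thm:del-not-ins} is a detail the paper leaves implicit, and it is handled correctly.
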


\section{A Lower Bound for \texorpdfstring{$R$}{R}-Insertion Covering Codes}\label{sec:insertion-lower}

We now turn to the lower bound concerning insertion covering codes, which is obtained by combining an inverse Bonferroni inequality with a tree structure on deletion positions.

Let $V_{n,R}$ be the set of $R$-tuples $(i_1,i_2,\ldots,i_R)$ with $1\le i_1<i_2<\cdots<i_R\le n+R$. Thus $|V_{n,R}|=\binom{n+R}{R}$. We define a tree $G_{n,R}$ on the vertex set $V_{n,R}$ recursively. When $R=1$, let $V_{n,1}=[n+1]$ and define the edge set by
\[
   E_{n,1}=\{\{i,i+1\}: i\in[n]\}.
\]
Suppose $E_{n,R-1}$ has been defined. We define $E_{n,R}$ by
\begin{align*}
E_{n,R}={}&\Bigl\{\{(i_1,\ldots,i_{R-1},i_R),
(i_1,\ldots,i_{R-1},i_R+1)\}:~1\le i_1<\cdots<i_{R-1}<i_R<n+R\Bigr\}\\
&\cup\Bigl\{\{(i_1,\ldots,i_{R-1},n+R),
(j_1,\ldots,j_{R-1},n+R)\}:~\{(i_1,\ldots,i_{R-1}),
(j_1,\ldots,j_{R-1})\}\in E_{n,R-1}\Bigr\}.
\end{align*}
Equivalently, one may describe $G_{n,R}$ as follows.
For any vertex
\[
   v=(i_1,i_2,\ldots,i_R)\neq (n+1,n+2,\ldots,n+R),
\]
let
\[
   t=\max\{s\in[R]: i_s<n+s\}.
\]
Then
\[
   v=(i_1,\ldots,i_t,n+t+1,\ldots,n+R),
\]
and $v$ is adjacent to
\[
   (i_1,\ldots,i_t+1,n+t+1,\ldots,n+R).
\]
Roughly speaking, vertices in $G_{n, R}$ are adjacent in lexicographical order. By checking that $G_{n, R}$ is connected and $|E_{n, R}|=|V_{n, R}|-1$, it is not difficult to see that $G_{n, R}$ is a tree.

Let $C\subseteq\Sigma_q^n$. For each
$v=(i_1,i_2,\ldots,i_R)\in V_{n,R}$, define
$C_v\subseteq\Sigma_q^{n+R}$ as follows: a word
$x\in\Sigma_q^{n+R}$ belongs to $C_v$ if, after deleting the
$i_1$-th, $i_2$-th, $\ldots$, $i_R$-th coordinates of $x$, the
resulting word belongs to $C$.

\begin{theorem}\label{thm:insert-lower}
If $C\subseteq\Sigma_q^n$ is an $R$-insertion covering code, then
\[
   |C|\ge \frac{q^n}{\binom{n+R}{R}-1}.
\]
Consequently,
\[
   I(q,n,R)\ge \frac{q^n}{\binom{n+R}{R}-1}.
\]
\end{theorem}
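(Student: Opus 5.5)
Since $C$ is an $R$-insertion covering code, every $x\in\Sigma_q^{n+R}$ has some length-$n$ subsequence in $C$. That is, $\bigcup_{v\in V_{n,R}} C_v=\Sigma_q^{n+R}$. I will apply Lemma~\ref{lem:inverse-bonferroni} to the family $\{C_v\}_{v\in V_{n,R}}$ with the tree $G_{n,R}$, which gives
\[
 q^{n+R}\le \sum_{v\in V_{n,R}}|C_v|-\sum_{\{u,v\}\in E_{n,R}}|C_u\cap C_v|.
\]
For each $v$, deleting the coordinates indexed by $v$ is a $q^R$-to-$1$ map onto $\Sigma_q^n$. Hence $|C_v|=q^R|C|$, and the first sum equals $\binom{n+R}{R}q^R|C|$. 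It remains to bound each edge term from below.

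The key step is to show that for every edge $\{u,v\}$ we have $|C_u\cap C_v|\ge q^{R-1}|C|$. By the equivalent description of $G_{n,R}$, every edge has the form $u=(i_1,\ldots,i_t,\ldots)$ and $v=(i_1,\ldots,i_t+1,\ldots)$, where the two tuples differ only in one coordinate, and in that coordinate by one. So $u$ and $v$ share $R-1$ deletion positions, and the remaining ones are adjacent positions $p=i_t$ and $p+1$. Take any $x$ with $x_p=x_{p+1}$. Then deleting position $p$ or position $p+1$ (together with the common $R-1$ positions) yields the same word. So $x\in C_u$ if and only if $x\in C_v$ for such $x$. I will count these words. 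Choose the codeword $c\in C$ to appear at the $n$ surviving positions. Choose the $R-1$ common deleted coordinates freely, giving $q^{R-1}$ options. Finally, the deleted one of $p,p+1$ is forced to equal its neighbour, which survives and so is determined by $c$. This gives at least $q^{R-1}|C|$ words in $C_u\cap C_v$, and distinct choices give distinct words because the map from $x$ to (surviving word, deleted entries) is injective.

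Both $u$ and $v$ are increasing tuples, so $p+1$ is not among the common deletion positions, and neither is $p$. Hence the counting above is legitimate. I expect this bookkeeping, confirming that every edge of $G_{n,R}$ is of this adjacent-shift type, to be the main point needing care. It follows from the explicit description of the neighbour of $v$: only $i_t$ changes, to $i_t+1$, and the next entry $n+t+1$ exceeds $i_t+1$ since $i_t<n+t$.

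Since there are $\binom{n+R}{R}-1$ edges, we obtain
\[
 q^{n+R}\le \binom{n+R}{R}q^R|C|-\Bigl(\binom{n+R}{R}-1\Bigr)q^{R-1}|C|.
\]
This is weaker than needed. To reach $q^n/(\binom{n+R}{R}-1)$, the edge bound must instead be $|C_u\cap C_v|\ge q^R|C|/q$ per edge together with a more careful accounting. Let me recheck the target. The target is $q^{n+R}\le(\binom{n+R}{R}-1)q^R|C|$, i.e.\ $\sum_v|C_v|-\sum_e|C_e|\le(\binom{n+R}{R}-1)q^R|C|$, which would require $\sum_e|C_e|\ge q^R|C|$. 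By the count above, $\sum_e|C_e|\ge(\binom{n+R}{R}-1)q^{R-1}|C|$, and this is at least $q^R|C|$ whenever $\binom{n+R}{R}-1\ge q$. The regime $q>\binom{n+R}{R}-1$ is the hard case, and I do not currently have a plan that closes it. The natural first attempt is to sharpen the edge count: exploit that $C_u\cap C_v$ also contains words with $x_p\ne x_{p+1}$ whose two deletions both land in $C$. Alternatively, one could use a tree whose edges pair up $q$ symbol-classes more efficiently. I expect this to be the main obstacle, and the argument above only gives the bound outright when $q\le\binom{n+R}{R}-1$.
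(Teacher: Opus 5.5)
Your framework matches the paper's: the inverse Bonferroni inequality on the tree $G_{n,R}$, $|C_v|=q^R|C|$, and the fact that each edge shifts one deletion position from $p$ to $p+1$. Your diagonal count $|C_u\cap C_v|\ge q^{R-1}|C|$ is also correct. However, as you noticed, plugging it in gives only
\[
|C|\ge \frac{q^{n+1}}{\binom{n+R}{R}(q-1)+1}.
\]
This implies the stated bound if and only if $q\le\binom{n+R}{R}-1$; for $R=1$ it is just the sphere-covering bound. For $q>\binom{n+R}{R}-1$, which is exactly the large-alphabet regime the theorem is aimed at, your argument proves nothing. So the proposal has a genuine gap: a per-edge bound that is linear in $|C|$ is too weak there, and you need one that grows quadratically in $|C|$.

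The missing idea is to count all of $C_u\cap C_v$, not only the words with $x_p=x_{p+1}$. Fix the $R-1$ common deleted entries ($q^{R-1}$ ways) and the word $w\in\Sigma_q^{n-1}$ read off the surviving positions outside $\{p,p+1\}$. Let $f(w)$ be the number of symbols $a$ such that putting $a$ into the slot filled by $x_{p+1}$ (under $u$) or by $x_p$ (under $v$) produces a codeword. Membership in $C_u$ constrains only $x_{p+1}$, and membership in $C_v$ constrains only $x_p$, both to the same set of $f(w)$ symbols. Hence
\[
|C_u\cap C_v|=q^{R-1}\sum_{w\in\Sigma_q^{n-1}}f(w)^2,\qquad \sum_{w}f(w)=|C|.
\]
Your estimate amounts to $f(w)^2\ge f(w)$. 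The paper instead applies Cauchy--Schwarz, $\sum_w f(w)^2\ge |C|^2/q^{n-1}$, which gives $|C_e|\ge q^{R-n}|C|^2$; this beats your bound exactly when $|C|\ge q^{n-1}$. Write $N=\binom{n+R}{R}$. The inverse Bonferroni inequality then becomes $(N-1)q^{R-n}|C|^2\le Nq^R|C|-q^{n+R}$, which rearranges to
\[
\bigl((N-1)|C|-q^n\bigr)\bigl(q^n-|C|\bigr)\ge 0.
\]
Since $|C|\le q^n$, and the case $|C|=q^n$ is trivial, this yields $|C|\ge q^n/(N-1)$ for every $q$.
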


\begin{proof}
For simplicity, write
\[
   V=V_{n,R},\qquad G=G_{n,R},\qquad E=E_{n,R}.
\]
For an edge $e=\{u,v\}\in E$, write $C_e=C_u\cap C_v$. Applying
Lemma~\ref{lem:inverse-bonferroni}, we obtain
\begin{equation}\label{eq:inv-bonf-use}
   \sum_{e\in E}|C_e|
   \le
   \sum_{v\in V}|C_v|-
   \left|\bigcup_{v\in V}C_v\right|.
\end{equation}

For a fixed $v=(i_1,i_2,\ldots,i_R)\in V$, define a map
\[
   \pi_v:C_v\to C
\]
as follows: for $x\in C_v$, $\pi_v(x)$ is the word obtained from $x$ by deleting the $i_1$-th, $i_2$-th, $\ldots$, $i_R$-th coordinates.
By the definition of $C_v$, the map $\pi_v$ is well-defined.
Moreover, for every $y\in C$, there are exactly $q^R$ words $x\in C_v$ such that $\pi_v(x)=y$, since the deleted $R$ coordinates can be chosen arbitrarily from $\Sigma_q$. On the other hand, if $y_1\neq y_2\in C$, then the preimages of $y_1$ and $y_2$ are disjoint.
Thus
\[
   |C_v|=q^R|C|.
\]
Since $C$ is an $R$-insertion covering code, we also have
\[
   \bigcup_{v\in V}C_v=\Sigma_q^{n+R}.
\]
Consequently,
\[
   \left|\bigcup_{v\in V}C_v\right|=q^{n+R}.
\]
Substituting these identities into Inequality (\ref{eq:inv-bonf-use}) gives
\begin{equation}\label{eq:upper-edge-sum}
\sum_{e\in E}|C_e|
\le
\binom{n+R}{R}q^R|C|-q^{n+R}.
\end{equation}

We next lower bound $|C_e|$ for each edge $e\in E$. The same
calculation applies to every edge after relabeling the coordinates, so it
suffices to consider the representative edge
\[
   u=(1,2,\ldots,R-1,R),
   \qquad
   v=(1,2,\ldots,R-1,R+1).
\]
Then $x=(x_1,\ldots,x_{n+R})$ belongs to $C_u\cap C_v$ if and only if
\[
   (x_{R+1},x_{R+2},\ldots,x_{n+R})\in C
\]
and
\[
   (x_R,x_{R+2},x_{R+3},\ldots,x_{n+R})\in C.
\]
Let $1_C$ denote the indicator function of $C$. Then
\begin{align*}
|C_u\cap C_v|
&=
\sum_{x_1,\ldots,x_{n+R}\in\Sigma_q}
1_C(x_{R+1},x_{R+2},\ldots,x_{n+R})\cdot
1_C(x_R,x_{R+2},\ldots,x_{n+R})\\
&=q^{R-1}
\sum_{x_{R},\ldots,x_{n+R}\in\Sigma_q}
1_C(x_{R+1},x_{R+2},\ldots,x_{n+R}) \cdot
1_C(x_R,x_{R+2},\ldots,x_{n+R})\\
&=q^{R-1}\sum_{x_{R+2}, \ldots, x_{n+R}\in \Sigma_q}\left(\sum_{x_{R+1}\in\Sigma_q}{\rm{1}}_C(x_{R+1}, x_{R+2}, \ldots, x_{n+R})\cdot\sum_{x_R\in\Sigma_q}{\rm{1}}_C(x_{R}, x_{R+2}, x_{R+3}, \ldots, x_{n+R})\right)\\
&=
q^{R-1}
\sum_{x_{R+2},\ldots,x_{n+R}\in\Sigma_q}
\left(
   \sum_{a\in\Sigma_q}
   1_C(a,x_{R+2},\ldots,x_{n+R})
\right)^2.
\end{align*}
By Cauchy-Schwartz inequality,
\begin{align*}
&\sum_{x_{R+2},\ldots,x_{n+R}\in\Sigma_q}
\left(
   \sum_{a\in\Sigma_q}
   1_C(a,x_{R+2},\ldots,x_{n+R})
\right)^2\\
\geq&\frac{1}{q^{n-1}}\left(\sum_{x_{R+2}, \ldots, x_{n+R}\in \Sigma_q}\sum_{a\in\Sigma_q}1_C(a, x_{R+2}, \ldots, x_{n+R})\right)^2\\
=&
\frac{|C|^2}{q^{n-1}}.
\end{align*}
Therefore, for every edge $e\in E$,
\begin{equation}\label{eq:edge-lower}
   |C_e|\ge q^{R-n}|C|^2.
\end{equation}

Since $G$ is a tree,
\[
   |E|=\binom{n+R}{R}-1.
\]
Combining inequalities \eqref{eq:upper-edge-sum} and \eqref{eq:edge-lower}, we obtain
\[
   \left(\binom{n+R}{R}-1\right)q^{R-n}|C|^2
   \le
   \binom{n+R}{R}q^R|C|-q^{n+R}.
\]
Equivalently,
\[
   \left(\left(\binom{n+R}{R}-1\right)|C|-q^n\right)(q^n-|C|)\ge0.
\]
Thus
\[
   |C|\ge
   \frac{q^n}{\binom{n+R}{R}-1}.
\]
This proves the theorem.
\end{proof}

\section{An Upper Bound for \texorpdfstring{$R$}{R}-Deletion Covering Codes}\label{sec:deletion-upper}

We next consider deletion covering codes. In contrast to insertion balls, deletion balls are not regular, so we use an auxiliary hypergraph and apply Pippenger's theorem.


\begin{theorem}\label{thm:deletion-upper}
Let $n$ and $R$ be fixed. If $q\to\infty$, then
\[
   D(q,n,R)\le (1+o(1))\frac{q^{n-R}}{\binom nR}.
\]
Together with Inequality \eqref{eq:deletion-counting-bound}, this gives
\[
   D(q,n,R)=(1+o(1))\frac{q^{n-R}}{\binom nR}.
\]
\end{theorem}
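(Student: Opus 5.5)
We will apply Theorem~\ref{thm:pippenger} to an auxiliary hypergraph. Let $r=\binom nR$ and define an $r$-uniform hypergraph $\mathcal H$ with vertex set $V=\Sigma_q^{n-R}$. For every word $x\in\Sigma_q^n$ with $|\mathcal D_R(x)|=\binom nR$, we take the hyperedge $\mathcal D_R(x)$. Call such $x$ \emph{good}; for instance, $x$ is good whenever all its symbols are distinct, since then distinct $R$-subsets of positions give distinct subsequences. A cover of $\mathcal H$ by $m$ hyperedges yields an $R$-deletion covering code of size $m$. Because $\mathcal H$ omits some words, we must check that every vertex has positive degree. This holds as long as $n<q$: given $y\in\Sigma_q^{n-R}$, we insert $R$ symbols in such a way that the deletion ball has full size. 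One way is to take $y$ and append $R$ fresh symbols not occurring in $y$. Deleting $R$ positions then gives distinct outputs, provided $y$ itself has all its $\binom nR$ deletion patterns distinct. To avoid this issue, we can instead restrict the edges to those $x$ with all symbols distinct and handle vertices with repeated symbols separately. Those vertices number only $O(q^{n-R-1})=o(q^{n-R})$, so we cover each by an arbitrary extra codeword, at a cost of $o(q^{n-R}/r)$ since $n,R$ are fixed.

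So we restrict to $V'=\{y\in\Sigma_q^{n-R}: y \text{ has distinct symbols}\}$ and to $x\in\Sigma_q^n$ with distinct symbols, whose hyperedges $\mathcal D_R(x)\subseteq V'$ have exactly $r$ elements. We then compute degrees. For $y\in V'$, $d(y)$ counts the $x$ with distinct symbols that contain $y$ as a subsequence. Each such $x$ arises from a choice of the $R$ inserted positions and the $R$ inserted symbols, and distinct choices give distinct $x$ because all symbols are distinct. Hence
\[
d(y)=\binom nR (q-n+R)(q-n+R-1)\cdots(q-n+1)=(1+o(1))\binom nR q^R=:D(1+o(1)),
\]
which satisfies conditions (i) and (ii) with $D=\binom nR q^R$. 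For the codegree of distinct $y,y'\in V'$, a common superword $x$ of length $n$ contains both. Since $y\neq y'$, the union of their symbol sets, counting positions, forces at least one symbol of $x$ outside $y$ to be determined by $y'$. Thus at most $R-1$ inserted symbols are free, giving $d(y,y')=O_{n,R}(q^{R-1})=o(D)$ and condition (iii). The main point needing care is precisely this codegree bound: if $y$ and $y'$ have the same symbol set but a different order, a common superword must still use a symbol of $y'$ as an inserted symbol. This still fixes one inserted symbol to lie in a set of size at most $n$, so the count is $O(q^{R-1})$.

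Fix $a>0$. For $q$ large, $D\ge d_0$ and all degrees are $(1\pm\gamma)D$, so Theorem~\ref{thm:pippenger} gives a cover of $V'$ by at most $(1+a)|V'|/r\le(1+a)q^{n-R}/\binom nR$ hyperedges. Adding one codeword per vertex outside $V'$ costs $O(q^{n-R-1})$. Letting $a\to0$ yields $D(q,n,R)\le(1+o(1))q^{n-R}/\binom nR$, and combining this with \eqref{eq:deletion-counting-bound} gives the asymptotic equality.
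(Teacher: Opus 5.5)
Your proposal is correct and follows essentially the paper's proof. You restrict to distinct-symbol words $V'$ and distinct-symbol centers, compute $d(y)=\binom nR(q-n+R)\cdots(q-n+1)$, bound codegrees by $O(q^{R-1})$, apply Theorem~\ref{thm:pippenger}, and then cover the $O(q^{n-R-1})$ words with repeated symbols by one extra codeword each. Your position-based codegree bound is valid, but when $y,y'$ have the same symbol set the codegree is in fact $0$, since a distinct-symbol $x$ fixes the order of those symbols; so that caveat, like your exploratory first paragraph on ``good'' words, can simply be dropped.
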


\begin{proof}
Construct an auxiliary hypergraph $\mathcal H=(V,E)$ as follows.
Let $V$ be the set of all words in $\Sigma_q^{n-R}$ with distinct symbols, and let $E$ consist of the sets $\mathcal D_R(x)$, where $x\in\Sigma_q^n$ also has distinct symbols.
For such an $x$, all its $R$-deletion subsequences are distinct and still have distinct symbols. Hence $\mathcal H$ is $\binom nR$-uniform. Moreover,
\[
   |V|
   =
   q(q-1)\cdots(q-n+R+1)
   =
   (1+o(1))q^{n-R}.
\]

For each $y\in V$, the degree $d(y)$ equals the number of distinct-symbol words $x\in\Sigma_q^n$ containing $y$ as a subsequence. Thus
\[
   d(y)
   =
   \binom nR (q-n+R)(q-n+R-1)\cdots(q-n+1)
   =
   (1+o(1))\binom nR q^R.
\]
We next estimate the codegrees. Let $y,z\in V$ be distinct. If $y$ and $z$ use the same set of symbols, then, since they induce different orders on this set, no word with distinct symbols can contain both of them as subsequences. Hence $d(y,z)=0$ in this case. Otherwise, some symbol of $z$, say $z_j$, does not appear in $y$. To form a word $x$ containing both $y$ and $z$, first choose the positions of the symbols of $y$ in $x$, preserving their order; this gives at most $\binom nR$ choices. The symbol $z_j$ must occupy one of the remaining $R$ positions, and the other $R-1$ positions can be filled in at most $q^{R-1}$ ways. Therefore
\[
   d(y,z)\le \binom nR Rq^{R-1}=O(q^{R-1})=o(q^R),
\]
where $n$ and $R$ are fixed.

Thus, for all sufficiently large $q$, the hypergraph $\mathcal H$ satisfies the hypotheses of Theorem~\ref{thm:pippenger}. Hence it contains a cover $\mathcal E_1\subseteq E$ with
\[
   |\mathcal E_1|
   \le
   (1+o(1))\frac{|V|}{\binom nR}
   =
   (1+o(1))\frac{q^{n-R}}{\binom nR}.
\]
Let $C_1$ be the set of center words corresponding to the hyperedges in $\mathcal E_1$. Then
\[
   \bigcup_{x\in C_1}\mathcal D_R(x)\supseteq V.
\]

It remains to cover the words outside $V$, namely those with repeated symbols.
Choose $a\in \Sigma_q$ and let
\[
   C_2= \{ya^R: y\in\Sigma_q^{n-R}\setminus V\},
\]
then $C_2$ covers $\Sigma_q^{n-R}\setminus V$ and
\[
   |C_2|= |\Sigma_q^{n-R}\setminus V|= q^{n-R}-q(q-1)\cdots(q-n+R+1)=
   o(q^{n-R}).
\]
Consequently,
\[
   C=C_1\cup C_2
\]
is an $R$-deletion covering code, and
\[
   |C|
   \le
   |C_1|+|C_2|
   =
   (1+o(1))\frac{q^{n-R}}{\binom nR}.
\]
This completes the proof.
\end{proof}

\subsection{Single-deletion covering codes over large alphabets}

The preceding argument is non-constructive. For single deletion, we can give a more concrete construction by starting from differential VT codes and then completing the uncovered part.


\begin{definition}
For any $q\ge2$ and $a\in[0,qn-1]$, let
\[
   F_a=\{y\in \mathcal D_1(x):\ x\in\DVT_a(n,q)\}\subseteq\Sigma_q^{n-1}
\]
be the set of words covered by the differential VT code $\DVT_a(n,q)$ under one deletion, and let
\[
   \overline F_a=\Sigma_q^{n-1}\setminus F_a.
\]
\end{definition}

\begin{theorem}\label{thm:single-del-completion}
For any $q\ge2$ and $a\in[0,qn-1]$, define
\[
   C_a=\DVT_a(n,q)\cup \{y0: y\in\overline F_a\}.
\]
Then $C_a$ is a single-deletion covering code. Suppose that $n$ is a prime that does not divide $q$. Then there exists $a\in[0,nq-1]$ such that
\[
   |C_a|\le \frac{q^{n-1}+(q^{n-2}+1)(n-1)}{n}.
\]
Moreover, if $n$ is sufficiently large and $q>n^{1+c}$ for some constant $c>0$, then
\[
   |C_a|\le \frac{q^{n-1}}{n}(1+o(1)).
\]
\end{theorem}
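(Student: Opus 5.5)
Covering is immediate: every $y\in\Sigma_q^{n-1}$ lies in $F_a$, hence in $\mathcal D_1(x)$ for some $x\in\DVT_a(n,q)\subseteq C_a$, or lies in $\overline F_a$, in which case $y\preceq y0\in C_a$. So $C_a$ is a single-deletion covering code, and
\[
|C_a|\le |\DVT_a(n,q)|+|\overline F_a| = |\DVT_a(n,q)|+q^{n-1}-|F_a|.
\]

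To bound $|F_a|$ from below, I will use Lemma~\ref{lem:dvt-single-del}. Since $\DVT_a(n,q)$ corrects a single deletion, the balls $\mathcal D_1(x)$, $x\in\DVT_a(n,q)$, are pairwise disjoint. Hence
\[
|F_a|=\sum_{x\in\DVT_a(n,q)}|\mathcal D_1(x)|=\sum_{x\in \DVT_a(n,q)} r(x),
\]
where $r(x)$ is the number of runs of $x$. This gives
\[
|C_a|\le q^{n-1}-\sum_{x\in\DVT_a(n,q)}(r(x)-1).
\]
Now average over $a$. The codes $\DVT_a(n,q)$, $a\in[0,qn-1]$, partition $\Sigma_q^n$. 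Therefore
\[
\sum_a\sum_{x\in\DVT_a}(r(x)-1)=\sum_{x\in\Sigma_q^n}(r(x)-1)=(n-1)(q-1)q^{n-1},
\]
since each of the $n-1$ adjacent pairs differs in $(q-1)q^{n-1}$ words. Some $a$ therefore achieves at least the average, $(n-1)(q-1)q^{n-2}/n$. This yields
\[
|C_a|\le q^{n-1}-\frac{(n-1)(q-1)q^{n-2}}{n}=\frac{q^{n-1}+(n-1)q^{n-2}}{n},
\]
which is even at most the stated bound.

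The averaging ignores the exact code sizes, so Lemma~\ref{lem:dvt-size-prime} is not strictly needed. It would enter only if I restrict to residues $a$ with $n\nmid a$ or track the $+1$ discrepancy, and that variant should produce exactly the stated $(q^{n-2}+1)(n-1)$ term. I will keep the simpler averaging version, noting that it implies the claim.

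For the asymptotic statement, divide by $q^{n-1}/n$. This gives
\[
\frac{|C_a|}{q^{n-1}/n}\le 1+\frac{n-1}{q}+\frac{n-1}{q^{n-1}}.
\]
If $q>n^{1+c}$, then $(n-1)/q<n^{-c}\to0$, and the last term is negligible, so $|C_a|\le(1+o(1))q^{n-1}/n$.

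The main point needing care is the disjointness used to compute $|F_a|$ exactly. It follows from Lemma~\ref{lem:dvt-single-del}, because a single-deletion-correcting code has pairwise disjoint $1$-deletion balls. The only other thing to check is that the $\DVT_a(n,q)$ for $a\in[0,qn-1]$ do partition $\Sigma_q^n$, which is clear because they are defined by the residue of $\Syn(\Diff(x))$ modulo $qn$.
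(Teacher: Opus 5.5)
Your proof is correct, and it reaches a slightly stronger conclusion than the paper through a small but real change of route.

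The paper first chooses $a$ by pigeonhole to maximize the edge count $|E_a|=\sum_{x\in\DVT_a(n,q)}|\mathcal D_1(x)|$. It then bounds $|\DVT_a(n,q)|$ separately, by its worst case $\frac{q^{n-1}-1}{n}+1$ over all $a$, using Lemma~\ref{lem:dvt-size-prime}. Bounding the two terms independently is exactly where the extra $\frac{n-1}{n}$ in $\frac{q^{n-1}+(q^{n-2}+1)(n-1)}{n}$ comes from. It is also the only place where the hypothesis that $n$ is a prime not dividing $q$ enters. Your guess that the stated term comes from restricting to residues with $n\nmid a$ is therefore not how it actually arises.

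You instead average the combined quantity $|E_a|-|\DVT_a(n,q)|=\sum_{x\in\DVT_a(n,q)}(r(x)-1)$ over the partition. This needs no knowledge of the individual code sizes and gives $|C_a|\le\frac{q^{n-1}+(n-1)q^{n-2}}{n}$ for every $n$ and every $q\ge2$.

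Both arguments rest on the same two ingredients: the partition of $\Sigma_q^n$ into the codes $\DVT_a(n,q)$, and the disjointness of single-deletion balls from Lemma~\ref{lem:dvt-single-del}. What your version buys is dropping Lemma~\ref{lem:dvt-size-prime} and the arithmetic condition on $n$ entirely.

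The paper's written proof also stops after the non-asymptotic bound. Your division giving a ratio of at most $1+\frac{n-1}{q}$ supplies the ``moreover'' part, and it shows that $q/n\to\infty$ already suffices, rather than $q>n^{1+c}$.
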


\begin{proof}
We first verify the covering property. Since
\[
   \mathrm{DVT}_a(n,q)\subseteq \mathcal C_a,
\]
every word in $\mathcal F_a$ is covered by $\mathcal C_a$ under one deletion.
On the other hand, if $y\in\overline{\mathcal F}_a$, then by construction $y0\in\mathcal C_a$ and $y\in\mathcal D_1(y0)$.
Hence $\mathcal C_a$ is a single-deletion covering code.

It remains to bound the size of $\mathcal C_a$. Clearly,
\[
   |\mathcal C_a|
   \le
   |\mathrm{DVT}_a(n,q)|+|\overline{\mathcal F}_a|.
\]
When $n$ is a prime that does not divide $q$, Lemma~\ref{lem:dvt-size-prime} gives
\[
   |\mathrm{DVT}_a(n,q)|
   \le
   \frac{q^{n-1}-1}{n}+1.
\]
Thus it suffices to find a syndrome $a$ for which $\mathcal F_a$ is large.

Construct a bipartite graph
\[
   \mathcal G=(\Sigma_q^{n-1}\cup\Sigma_q^n,E),
\]
where $y\in\Sigma_q^{n-1}$ is adjacent to $x\in\Sigma_q^n$ if and only if
$y\in\mathcal D_1(x)$,
or equivalently, $x\in\mathcal I_1(y)$.
By Equation \eqref{eq:insball},
\[
   |E|=q^{n-1}\bigl((q-1)n+1\bigr).
\]

For each $a\in[0,qn-1]$, let $E_a$ be the set of edges in $E$ incident to the code $\mathrm{DVT}_a(n,q)$.
Since the codes
\[
   \mathrm{DVT}_a(n,q),\qquad a\in[0,qn-1],
\]
form a partition of $\Sigma_q^n$, the edge sets $E_a$ form a partition of $E$. Therefore, by the pigeonhole principle, there exists $a\in[0,qn-1]$ such that
\[
   |E_a|
   \ge
   \frac{|E|}{qn}
   =
   \frac{q^{n-2}\bigl((q-1)n+1\bigr)}{n}.
\]
By Lemma~\ref{lem:dvt-single-del}, the code $\mathrm{DVT}_a(n,q)$ is a single-deletion correcting code. Hence the deletion balls
\[
   \mathcal D_1(x),\qquad x\in\mathrm{DVT}_a(n,q),
\]
are pairwise disjoint. Consequently, each word in $\mathcal F_a$ is adjacent to exactly one codeword in $\mathrm{DVT}_a(n,q)$, and therefore
\[
   |\mathcal F_a|=|E_a|
   \ge
   \frac{q^{n-2}\bigl((q-1)n+1\bigr)}{n}.
\]
It follows that
\[
\begin{aligned}
   |\overline{\mathcal F}_a|
   = q^{n-1}-|\mathcal F_a|
   \le
   q^{n-1}
   -
   \frac{q^{n-2}\bigl((q-1)n+1\bigr)}{n}
   =
   \frac{(n-1)q^{n-2}}{n}.
\end{aligned}
\]
Combining this with the bound on $|\mathrm{DVT}_a(n,q)|$, we obtain
\[
\begin{aligned}
   |\mathcal C_a|
   &\le
   \frac{q^{n-1}-1}{n}+1
   +
   \frac{(n-1)q^{n-2}}{n}
   \le
   \frac{q^{n-1}+(q^{n-2}+1)(n-1)}{n}.
\end{aligned}
\]
This proves the desired bound.
\end{proof}

\section{Two-Burst-Deletion Covering Codes}\label{sec:burst}

We now turn to the construction of burst-deletion covering codes.
The key observation is that some known burst-deletion-correcting codes also possess the corresponding covering property.

\subsection{Binary codes}

We first show that binary differential VT codes give perfect covering codes for two-burst deletions.

\begin{theorem}\label{thm:binary-burst-cover}
For every $a\in[0,2n-1]$, the differential VT code $\DVT_a(n,2)$ is a binary two-burst-deletion covering code. Moreover, there exists $a\in[0,2n-1]$ such that
\[
   |\DVT_a(n,2)|\le \frac{2^{n-1}}{n}.
\]
\end{theorem}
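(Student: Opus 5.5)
The plan is a double-counting argument on burst-insertion balls, which avoids computing syndromes directly. Fix $y\in\Sigma_2^{n-2}$. By Lemma~\ref{lem:burst-insertion-size} with $q=2$, $b=2$ and length $n-2$, the two-burst-insertion ball $\BI_2(y)\subseteq\Sigma_2^n$ has exactly
\[
   |\BI_2(y)|=2^{1}\bigl((2-1)(n-1)+1\bigr)=2n
\]
distinct elements. Since $y$ is obtained from $x\in\Sigma_2^n$ by one burst of two deletions if and only if $x$ is obtained from $y$ by one burst of two insertions, we have $x\in \BI_2(y)$ if and only if $y\in\BD_2(x)$. So it suffices to show that $\BI_2(y)\cap\DVT_a(n,2)\neq\emptyset$ for every $a\in[0,2n-1]$.

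The key step is to combine the correcting property with the partition structure. The codes $\DVT_a(n,2)$, $a\in[0,2n-1]$, partition $\Sigma_2^n$ according to the residue of $\Syn(\Diff(x))$ modulo $2n$. By Lemma~\ref{lem:dvt-binary-burst}, each $\DVT_a(n,2)$ corrects a two-burst deletion, so the balls $\BD_2(x)$, $x\in\DVT_a(n,2)$, are pairwise disjoint. Hence $\BI_2(y)$ contains at most one codeword of each $\DVT_a(n,2)$: two such codewords $x\ne x'$ would both have $y$ in their two-burst-deletion balls. Thus the $2n$ elements of $\BI_2(y)$ are distributed among the $2n$ classes with at most one element per class. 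Therefore each class contains exactly one element of $\BI_2(y)$. In particular $y$ is covered by $\DVT_a(n,2)$ for every $a$, which proves the covering property (and in fact gives perfectness).

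For the size bound, the $2n$ codes $\DVT_a(n,2)$ partition $\Sigma_2^n$. By pigeonhole, some $a$ satisfies $|\DVT_a(n,2)|\le 2^n/(2n)=2^{n-1}/n$.

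The only delicate point I expect is the equivalence between burst deletions from $x$ and burst insertions into $y$, together with making sure that Lemma~\ref{lem:burst-insertion-size} counts distinct words. Both are immediate from the definitions and the cited lemma, so the main work lies in the pigeonhole-exactness observation above.
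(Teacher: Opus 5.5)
Your proof is correct and uses the same ingredients as the paper: $|\BI_2(y)|=2n$ from Lemma~\ref{lem:burst-insertion-size}, the partition of $\Sigma_2^n$ into the $2n$ codes $\DVT_a(n,2)$, and the disjointness of burst-deletion balls from Lemma~\ref{lem:dvt-binary-burst}, followed by the same pigeonhole argument for the size bound. The only difference is organizational. You run the count pointwise for each fixed $y$, showing it has exactly one neighbor in each class. The paper instead sums the edge counts over all $y$ and observes that each code's edge count is capped at $2^{n-2}$ and so must equal $2^{n-2}$. The two versions are equivalent.
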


\begin{proof}
Construct a bipartite graph
\[
   G=(\Sigma_2^{n-2}\cup \Sigma_2^n,E),
\]
where $y\in\Sigma_2^{n-2}$ is adjacent to $x\in\Sigma_2^n$ if and only if $y\in \BD_2(x)$, or equivalently, $x\in \BI_2(y)$. By Lemma~\ref{lem:burst-insertion-size} with $b=2$, each $y\in\Sigma_2^{n-2}$ has exactly $2n$ neighbors. Hence
\[
   |E|=2^{n-2}\cdot 2n=n2^{n-1}.
\]

For each $a\in[0,2n-1]$, let $E_a$ be the set of edges incident to $\DVT_a(n,2)$, and let $S_a\subseteq\Sigma_2^{n-2}$ be the set of words covered by $\DVT_a(n,2)$. Since by Lemma~\ref{lem:dvt-binary-burst} $\DVT_a(n,2)$ is a two-burst-deletion correcting code, the two-burst-deletion balls of distinct codewords in $\DVT_a(n,2)$ are disjoint. Therefore
\[
   |S_a|=|E_a|\le 2^{n-2}.
\]

On the other hand, the codes $\DVT_a(n,2)$, $a\in[0,2n-1]$, form a partition of $\Sigma_2^n$. Thus the edge sets $E_a$ form a partition of $E$, and so
\[
   \sum_{a=0}^{2n-1}|E_a|=|E|=n2^{n-1}.
\]
The average of the $2n$ numbers $|E_a|$ is $2^{n-2}$. Since each of them is at most $2^{n-2}$, we must have
\[
   |E_a|=2^{n-2}
\]
for every $a\in[0,2n-1]$. Hence $|S_a|=2^{n-2}$ for every $a$, which means that each $\DVT_a(n,2)$ covers the whole space $\Sigma_2^{n-2}$.

Finally, applying the pigeonhole principle to the partition of $\Sigma_2^n$ into the $2n$ differential VT codes gives a syndrome $a$ such that
\[
   |\DVT_a(n,2)|\le \frac{2^n}{2n}=\frac{2^{n-1}}{n}.
\]
This completes the proof.
\end{proof}

\begin{remark}
For each $a\in[0,2n-1]$, the code $\DVT_a(n,2)$ is simultaneously a binary two-burst-deletion correcting code and a binary two-burst-deletion covering code. In this sense, it is a binary perfect code for two-burst deletions.
\end{remark}

\subsection{Non-binary codes}

We next lift the binary construction to non-binary alphabets by separating the parity pattern from two checksum constraints over the quotient alphabet.

\begin{definition}
For an integer $m$, write $(m)_2$ for $m\pmod 2$. For a sequence $x=(x_1,\ldots,x_n)$, define
\[
  (x)_2=\big((x_1)_2,(x_2)_2,\ldots,(x_n)_2\big).
\]
\end{definition}

\begin{theorem}\label{thm:qary-burst-cover}
For any integers $q\ge2$, $a\in[0,2n-1]$, and
$c_0,c_1\in[0,\lfloor q/2\rfloor-1]$, define
\[
\begin{split}
C_{a,c_0,c_1}(n,q)=\{x\in\Sigma_q^n:\ &(x)_2\in\DVT_a(n,2),\\
&\sum_{i=1}^{\lceil n/2\rceil}
\left\lfloor\frac{x_{2i-1}}2\right\rfloor
\equiv c_0\pmod{\lfloor q/2\rfloor},\\
&\sum_{i=1}^{\lfloor n/2\rfloor}
\left\lfloor\frac{x_{2i}}2\right\rfloor
\equiv c_1\pmod{\lfloor q/2\rfloor}\}.
\end{split}
\]
Then $C_{a,c_0,c_1}(n,q)$ is a $q$-ary two-burst-deletion covering code. Moreover, there exists a choice of parameters such that
\[
   |C_{a,c_0,c_1}(n,q)|\le \frac{q^n}{2n\lfloor q/2\rfloor^2}.
\]
\end{theorem}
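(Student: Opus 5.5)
The plan is to prove the covering property directly by lifting the binary result, Theorem~\ref{thm:binary-burst-cover}, to $\Sigma_q$. The size bound will then follow from pigeonhole.

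Fix $a$, $c_0$, $c_1$ and an arbitrary $y\in\Sigma_q^{n-2}$. I first apply Theorem~\ref{thm:binary-burst-cover} to the parity word $(y)_2\in\Sigma_2^{n-2}$. This gives a codeword $u\in\DVT_a(n,2)$ and an index $i\in[n-1]$ such that deleting the burst $u_iu_{i+1}$ from $u$ yields $(y)_2$. I then build $x\in\Sigma_q^n$ by inserting two new symbols $s,s'$ into $y$, so that they occupy positions $i$ and $i+1$ of $x$. The parities are fixed as $(s)_2=u_i$ and $(s')_2=u_{i+1}$. By construction $(x)_2=u\in\DVT_a(n,2)$ and $y\in\BD_2(x)$. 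Only the two checksum conditions remain to be met.

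The key observation is a bookkeeping one. A burst of length two occupies exactly one odd and one even position of $x$. Every symbol of $y$ after the burst is shifted by exactly $2$, so it keeps the parity of its index. Hence the odd-indexed entries of $x$ are precisely the odd-indexed entries of $y$ plus one inserted symbol, and likewise for the even-indexed entries. Write $\sigma_0(y)$ and $\sigma_1(y)$ for the odd and even checksums of $y$, defined as in the statement. Let $t_0$ and $t_1$ be the quotients $\lfloor s/2\rfloor$ or $\lfloor s'/2\rfloor$ of the inserted symbols landing in an odd and an even position, respectively. The conditions then become
\[
\sigma_0(y)+t_0\equiv c_0,\qquad \sigma_1(y)+t_1\equiv c_1 \pmod{\lfloor q/2\rfloor}.
\]
I choose $t_0,t_1\in[0,\lfloor q/2\rfloor-1]$ as the unique residues satisfying these congruences. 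Each inserted symbol is then set to $2t+p$, where $p\in\{0,1\}$ is the prescribed parity bit and $t$ is the corresponding $t_0$ or $t_1$. I must check that $2t+p\le 2\lfloor q/2\rfloor-1\le q-1$, so the symbol lies in $\Sigma_q$; this holds for both parities. For odd $q$, symbols of $y$ equal to $q-1$ have quotient $\lfloor q/2\rfloor\equiv 0$, which causes no trouble. When $\lfloor q/2\rfloor=1$ the checksum constraints are vacuous. This shows $x\in C_{a,c_0,c_1}(n,q)$.

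For the size bound, I use that as $(a,c_0,c_1)$ ranges over $[0,2n-1]\times[0,\lfloor q/2\rfloor-1]^2$, the codes $C_{a,c_0,c_1}(n,q)$ partition $\Sigma_q^n$. Indeed, each $x$ determines $a$ via $\Syn(\Diff((x)_2))$ and determines $c_0,c_1$ via its two checksums. Pigeonhole over these $2n\lfloor q/2\rfloor^2$ classes then gives the claimed bound. I expect the main point needing care to be the index-parity bookkeeping of the burst insertion. In particular, I must confirm that the binary burst location produced by Theorem~\ref{thm:binary-burst-cover} can be realized as an insertion into $y$ without disturbing the parity pattern. It can, because $(y)_2$ is exactly $u$ with the burst removed.
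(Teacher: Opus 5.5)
Your proposal is correct and follows the paper's proof essentially step for step. You apply the covering property of $\DVT_a(n,2)$ to the parity word, insert two symbols with the prescribed parities whose quotients fix the odd and even checksums (a length-two burst meets one odd and one even position and shifts the rest by $2$), and finish with the partition and pigeonhole argument. Your unified treatment of $t_0,t_1$ replaces the paper's separate odd-$i$ and even-$i$ cases, and your check that $2t+p\le q-1$ makes explicit a detail the paper leaves implicit.
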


\begin{proof}
Let $y=(y_1,\ldots,y_{n-2})\in\Sigma_q^{n-2}$. Since by Theorem~\ref{thm:binary-burst-cover} $\DVT_a(n,2)$ is a binary two-burst-deletion covering code, there exist an index $i\in[1,n-1]$ and bits $d_0,d_1\in\{0,1\}$ such that
\[
   (y_1,\ldots,y_{i-1},d_0,d_1,y_i,\ldots,y_{n-2})_2\in\DVT_a(n,2).
\]

Define
\[
   s_j\equiv c_j-
   \sum_k\left\lfloor\frac{y_{2k-1+j}}2\right\rfloor
   \pmod{\lfloor q/2\rfloor},
   \qquad j=0,1,
\]
where the sum is over all valid indices.

If $i$ is odd, set
\[
   x=(y_1,\ldots,y_{i-1},2s_0+d_0,2s_1+d_1,y_i,\ldots,y_{n-2}).
\]
Then the inserted symbols occupy one odd position and one even position, respectively. Hence the parity pattern of $x$ belongs to $\DVT_a(n,2)$, and the two checksum constraints are satisfied by the choices of $s_0$ and $s_1$. Therefore
\[
   x\in C_{a,c_0,c_1}(n,q)
   \quad\text{and}\quad
   y\in\BD_2(x).
\]

If $i$ is even, we instead set
\[
   x=(y_1,\ldots,y_{i-1},2s_1+d_0,2s_0+d_1,y_i,\ldots,y_{n-2}).
\]
In this case the first inserted symbol occupies an even position and the second occupies an odd position, so the same argument shows that $x\in C_{a,c_0,c_1}(n,q)$ and $y\in\BD_2(x)$.

Thus every $y\in\Sigma_q^{n-2}$ is covered by some codeword in $C_{a,c_0,c_1}(n,q)$, and hence $C_{a,c_0,c_1}(n,q)$ is a $q$-ary two-burst-deletion covering code.

Finally, as $a,c_0,c_1$ vary, the sets $C_{a,c_0,c_1}(n,q)$ form a partition of $\Sigma_q^n$. Since there are $2n\lfloor q/2\rfloor^2$ choices of parameters, the pigeonhole principle gives a choice satisfying
\[
   |C_{a,c_0,c_1}(n,q)|
   \le
   \frac{q^n}{2n\lfloor q/2\rfloor^2}.
\]
This completes the proof.
\end{proof}

\section{Conclusions}\label{sec:conclusion}

We conclude by summarizing the main results and pointing out several directions for future work.
In this paper, we study covering codes under insertion and deletion operations. We first show that insertion covering and deletion covering are not equivalent. We then develop bounds and constructions for insertion and deletion covering codes, with particular emphasis on the large-alphabet regime.
For insertion covering codes, we established a new lower bound that improves the standard sphere-covering bound in the large-alphabet regime.
For deletion covering codes, we proved that the elementary counting lower bound is asymptotically tight when the alphabet size tends to infinity.
We also gave a construction of non-binary single-deletion covering codes based on differential VT codes, which attains the asymptotically optimal size under suitable parameter choices.
Finally, we showed that certain burst-deletion-correcting codes also have the corresponding covering property, and used this observation to construct two-burst-deletion covering codes.

The following questions appear to be natural in view of the constructions and bounds developed in this paper.
\begin{itemize}[leftmargin=2em]
\item Although the asymptotic upper bound for general $R$-deletion covering codes is optimal in the large-alphabet regime, its proof is non-constructive. It would be interesting to find explicit constructions matching this bound for general $R$.

\item For insertion covering codes, there remains a gap between the known lower and upper bounds, even in the large-alphabet regime. Determining the exact asymptotic behavior of $I(q,n,R)$ for fixed $R$ remains an important open problem.

\item Recent semidefinite programming techniques have led to improved lower bounds for covering codes in the Hamming metric \cite{2025arXiv250401932G}. It would be interesting to see whether such techniques can be adapted to covering codes under insertions or deletions.

\item For non-binary burst-deletion covering codes, improving the redundancy of explicit constructions and developing efficient encoding algorithms remain challenging problems.

\item It would also be worthwhile to investigate covering codes under more general synchronization-error models, such as mixed insertion-deletion errors and multiple-burst deletion channels.
\end{itemize}

\section*{Acknowledgment}
The first author would like to thank Wenjun Yu and Tingting Chen for many helpful and valuable comments.

\bibliographystyle{abbrv}
\bibliography{ref}
\end{document}